
\documentclass[journal, onecolumn]{IEEEtran}
%


%

%
\usepackage{cite}

%
\ifCLASSINFOpdf
  \usepackage[pdftex]{graphicx}
\else
\fi
%
%

%
\usepackage{amsmath}
%

%

%
\usepackage{array}

\usepackage{stfloats}
\hyphenation{op-tical net-works semi-conduc-tor}


\usepackage{amssymb}  
\usepackage{amsthm}  
\usepackage{hyperref}  
\usepackage{xcolor}  
\def\ds{\displaystyle}  
\DeclareMathOperator{\esssup}{ess\,sup}  
\newtheorem{theorem}{Theorem}
\newtheorem{corollary}{Corollary}
\newtheorem{remark}{Remark}
\newtheorem{algorithm}{Algorithm}

\begin{document}
%
\title{Quickest Change Detection with Cost-Constrained Experiment Design}
%
%
%

\author{Patrick Vincent N. Lubenia
        and Taposh Banerjee~\IEEEmembership{}
\thanks{The authors are with the University of Pittsburgh, Pittsburgh, PA 15260 USA (email: pnl8$@$pitt.edu, taposh.banerjee$@$pitt.edu). This paper was presented in part at the 61st Allerton Conference on Communication, Control, and Computing on September 17--19, 2025 at Urbana, Illinois.}
}

\maketitle


\begin{abstract}
In the classical quickest change detection problem, an observer performs a single experiment to monitor a stochastic process. The goal in the classical problem is to detect a change in the statistical properties of the process, with the minimum possible delay, subject to a constraint on the rate of false alarms. 
This paper considers the case where, at each observation time, the decision-maker must choose between multiple experiments with varying information qualities and costs. The change can be detected using any of the experiments. 
The goal here is to detect the change with the minimum delay, subject to constraints on the rate of false alarms and the fraction of time each experiment is performed before the time of change. The constraint on the fraction of time can be used to control the overall cost of using the system of experiments. An algorithm called the two-experiment cumulative sum (2E-CUSUM) algorithm is first proposed to solve the problem when there are only two experiments. 
The algorithm for the case of multiple experiments, starting with three experiments, is then designed iteratively using the 2E-CUSUM algorithm. Two key ideas used in the design 
are the scaling of undershoots and the truncation of tests. 
The multiple-experiment algorithm can be designed to satisfy the constraints and can achieve the delay performance of the experiment with the highest quality within a constant. The important concept of data efficiency, where the observer has the choice of not performing any experiment, is explored as well. 
\end{abstract}

\begin{IEEEkeywords}
Controlled sensing, CUSUM, data-efficiency, experiment design, multiple-experiment, quickest change detection, sampling control.
\end{IEEEkeywords}

%
\IEEEpeerreviewmaketitle


\section{Introduction}
%
%
%
%
\IEEEPARstart{T}{he} problem of quickest change detection (QCD) involves a stochastic process whose statistical properties change at some point in time and which an observer (or decision-maker) wishes to detect as soon as possible. QCD algorithms have been developed for statistical process control, sensor networks surveillance, computer network security, and network analysis, among others  \cite{veeravalli2013, poor2009, tartakovsky2014, tartakovsky2019}. These algorithms have been developed to optimize a metric on the detection delay, subject to a constraint on a metric on the rate of false alarms.  We provide an overview of the relevant literature in Section~\ref{sec:PriorWork}. 


The classical QCD problem involves observing outputs from one experiment that monitors the stochastic process under investigation. In this paper, we study QCD with an experiment design where there are multiple experiments that are able to detect the change in the process. Specifically, we assume that when a change occurs, no matter which experiment is performed, the change can be detected with that experiment. The decision-maker can choose only one of the experiments during an observation time, and there is an associated cost with each experiment. We also study the case of data-efficiency, where there is a constraint on the fraction of time any experiment is used. There is extensive literature on sequential hypothesis testing and QCD, particularly in the context of experiment design. We provide a brief overview of this literature in Section~\ref{sec:PriorWork}. To the best of our knowledge, the precise setup we study in this paper has not been investigated before. 

As an example of our problem setup, consider a battery-operated intruder or anomaly detection system (e.g., an autonomous vehicle) with a motion sensor and a video sensor, with the assumption that the latter sensor gives more accurate information but is costlier than the former. In this setup, one can perform three experiments: 1) activating and observing just the motion sensor, 2) using only the video sensor, and 3) using both sensors at the same time. At any given time, the observer can choose to perform only one of the three experiments. Budget and energy consumption constraints can limit the number of times an experiment is chosen, so the decision-maker has to carefully choose which experiment to perform, leading to a constraint in the fraction of time each sensor is used. In some situations, when the decision-maker believes that no change or anomaly can occur, it might decide not to employ any sensors, leading to energy efficiency or data efficiency. More generally, when there are a total of $\ell$ sensors, there are $2^{\ell}$ possible experiments one can perform, including the one where no experiment is chosen. 

In this paper, we develop an algorithm or design a policy that can sequentially detect the change as soon as it occurs, while also deciding which experiment to perform at each time step. The policy avoids false alarms and also satisfies various constraints on the fraction of time each experiment is performed before the change occurs. We also show that this algorithm is second-order optimal, i.e., optimal within a constant of the best possible algorithm. While we discussed experiments involving sensors as examples, the mathematical discussion in the paper applies to an abstract concept of an experiment qualified by its Kullback-Leibler divergence. 

We begin our mathematical discussion by first providing a problem setup for the case of $m$ experiments (Section~\ref{sec:prob}). Before providing a solution and analysis for this, we consider first the special cases of two- (Section~\ref{sec:two-expt}) and three-experiment (Section~\ref{sec:three-expt}) systems to understand the various issues involved. To minimize the worst-case average detection delay (a metric that we use for delay) for the two-experiment design case, we propose an algorithm that we call the Two-Experiment CUSUM (2E-CUSUM) algorithm. A detection threshold is used to satisfy the average run length to false alarm constraint, while a scaling factor and truncated test are used to satisfy what we call the Pre-change Observation Ratio, which is the fraction of time an experiment is performed. 

In sequential analysis, a statistical test can be truncated in various ways (random or deterministic). In this paper, we use a notion of random truncation to extend the two-experiment solution to multiple experiments. In our paper, a truncated test provides a limit on the number of times an experiment can be performed before reverting to a higher-quality experiment. For the three-experiment scenario, we extend the 2E-CUSUM algorithm and use its truncated version to define the 3E-CUSUM algorithm. We use this ability to define the 3E-CUSUM algorithm in terms of the 2E-CUSUM algorithm to define the algorithm to solve the $m$-experiment problem in an iterative manner (Section~\ref{sec:multiple-expt}): the 3E-CUSUM algorithm is defined using truncated 2E-CUSUM algorithm, the 4E-CUSUM algorithm defined using truncated 3E-CUSUM algorithm, and finally, $m$E-CUSUM defined in terms of a truncated version of the $(m-1)$E-CUSUM algorithm. 

We explore data-efficiency in Section~\ref{sec:data-efficient}. Similar to the non-data-efficient case, we first design a data-efficient algorithm where there are two experiments. We then extend the developed algorithm to the multiple-experiment scenario in an iterative manner.


\subsection{Prior Work}
\label{sec:PriorWork}
In the classical problem of QCD, there is only one experiment to be performed. Algorithms and optimality theories in this single-experiment setup are developed in \cite{shewhart1925, page1954, shiryaev1963, lorden1971, moustakides1986optimal, lai1998information, tartakovsky2005general, veeravalli2013, poor2009, tartakovsky2014, tartakovsky2019}. 

The problem of sequential hypothesis testing with experiment design has been studied in \cite{krishnamurthy2016, chernoff1959, bessler1960-2, albert1961, kiefer1963, lalley1986, keener1984, nitinawarat2013, naghshvar2013, nitinawarat2015, deshmukh2021, gurevich2019, hemo2020, vaidhiyan2018, tsopelakos2019, tsopelakos2020, tsopelakos2023}. A foundational idea in this type of analysis is that there are unknown parameters in the observation process, and the notion of the best experiment to perform depends on these unknown parameters. For example, in an intrusion detection problem, an unknown parameter could capture the true location of an intruder, and the optimal experiment would be to collect data near that location. The problem studied in our paper is a QCD problem, and we assume that the optimal experiment is known, i.e., there are no unknown parameters in the distributions of the observed data. 

The problem of QCD with experiment design has been studied in \cite{veeravalli2024, gopalan2021, dragalin1996, zhang2022, xu2021, chaudhuri2021, xu2023}. In some of these papers \cite{veeravalli2024, gopalan2021}, there are unknown parameters in the post-change distribution, and the optimal experiment to perform depends on this unknown parameter. Also, there is no direct cost constraint on the experiments. In other works \cite{dragalin1996, zhang2022, xu2021, chaudhuri2021, xu2023}, the problem is studied in a multi-stream or multi-channel setup, where there is a constraint on the number of streams one can observe at a time. Our problem setup differs from those studied in these papers because we assume that there is a cost for each experiment, and there is a notion of the best experiment that can be performed. In this sense, there is an asymmetry in our problem that is absent from the problems studied in these papers. 

The problem of QCD with experiment design, with cost constraints on the experiment, has been studied in \cite{banerjee2013, banerjee2012-2, banerjee2015, banerjee2013-2, banerjee2013-3, banerjee2015-2, banerjee2015-3} and \cite{ren2017}. The papers \cite{banerjee2013, banerjee2012-2, banerjee2015, banerjee2013-2, banerjee2013-3, banerjee2015-2, banerjee2015-3} study the problem where there is only one experiment, but there is a penalty associated with performing that experiment all the time or a reward for data-efficiency or energy-efficiency. This paper extends the works in these papers by allowing for multiple experiments. The nature of our results is similar in spirit to those in \cite{banerjee2013, banerjee2012-2, banerjee2015, banerjee2013-2, banerjee2013-3, banerjee2015-2, banerjee2015-3}. But we provide a comprehensive study while addressing several issues that are encountered in such an extension. 

The work in \cite{ren2017} extends the works in \cite{banerjee2013, banerjee2012-2, banerjee2015, banerjee2013-2, banerjee2013-3, banerjee2015-2, banerjee2015-3} by allowing for two experiments without resorting to data-efficiency. Our general algorithm for multiple experiments also applies to the two-experiment case. However, our two-experiment solution differs from that in \cite{ren2017} in several ways. First, in our algorithm, we begin processing data by using the experiment with higher quality and switch to the experiment with lower quality only when the data suggests we should do so. Second, our algorithm can be designed to satisfy every possible constraint value. Finally, our algorithm design is different and is amenable to extension to multiple experiments. We discuss our two-experiment solution in Section~\ref{sec:two-expt} and provide detailed justification for each of the design choices.

\section{Problem Formulation}
\label{sec:prob}

In this section, we formulate the problem of quickest change detection (QCD) with cost-constrained experiment design.

Consider a process whose statistical properties change at time $\nu$. At each time point, a decision-maker observes the process by performing one of $m$ experiments, with experiment $m$ having the highest information quality, experiment $m - 1$ the next-highest, and so on, and experiment 1 having the lowest quality (to be made precise below). Let $X_n^i$ be the observation at time $n$ using experiment $i$. The sequence of random variables $\{ X_n^i \}$ has a probability density function $f_0^i$ for $n < \nu$, and $f_1^i$ for $n \geq \nu$. Moreover, the sequence $\{ X_n^j \}$ has a higher information quality than $\{ X_n^i \}$ for $j > i$, as measured by their Kullback-Leibler divergence, i.e.,
\begin{equation} 
\label{eq:InformationQuality}
D(f_1^1 \mathrel{\Vert} f_0^1) \leq D(f_1^2 \mathrel{\Vert} f_0^2) \leq \dots \leq D(f_1^{m - 1} \mathrel{\Vert} f_0^{m - 1}) \leq D(f_1^m \mathrel{\Vert} f_0^m).
\end{equation}
As a consequence, in many practical situations, it seems reasonable to assume that the experiment $j$ is costlier to perform than experiment $i$, for $j > i$.

At each time $n \geq 0$, the decision-maker decides whether or not a change has occurred based on the information at time $n$. If the information available suggests no change has occurred, the decision-maker decides which experiment to perform at time $n + 1$. To formalize this, let $S_n^i$ (``switch") be the indicator random variable such that $S_n^i = 1$ if experiment $i$ is used for decision-making, and $S_n^i = 0$ otherwise. Also, let
$$ S_n = (S_n^1, \dots, S_n^m). $$ 
The decision vector $S_{n + 1}$ is a function of the information $I_n$
available at time $n \geq 1$:
\begin{equation} 
S_{n+1} = \phi_n(I_n), \; n \geq 0. 
\end{equation}
Here $I_n = (I_n^1, I_n^2, \dots, I_n^m)$ with 
$$ I_n^i = \left[ S_1^i, \dots, S_n^i, (X_1^i)^{(S_1^i)}, \dots, (X_n^i)^{(S_n^i)} \right], $$
and $I_0^i = \varnothing$. For $k = 1, \dots, n$, $(X_k^i)^{(S_k^i)}$ represent $X_k^i$ if $S_k^i = 1$. Otherwise, $X_k^i$ is not in $I_n^i$. 

At any time $n$, using the information $I_n$ available at that time, the decision-maker decides whether a change has occurred using a stopping time $\tau_{\textup{\fontsize{6 pt}{10 pt}\selectfont $m$E}}$. Therefore, a policy for the multiple-experiment QCD is $\Psi = \{ \tau_{\textup{\fontsize{6 pt}{10 pt}\selectfont $m$E}}, \phi_0, \dots, \phi_{\tau_{\text{$m$E}} - 1} \}$.

Each experiment has a different cost, with higher-quality experiments possibly being costlier than lower-quality experiments. Thus, we seek control policies that meet a constraint on the fraction of time each experiment is performed before the change point. We propose the following observation cost metric, which we call the Pre-change Observation Ratio ($\textup{POR}$), with $\textup{POR}_i$ being the POR for experiment $i$:
\begin{equation} 
\label{eq:POR_def}
\begin{split}
\textup{POR}_i (\Psi)
& = \limsup_{n \to \infty} \mathsf{E}_n \left[ \frac{1}{n} \sum_{k = 1}^{n - 1} S_k^i\right] \\
& = \limsup_{n \to \infty} \mathsf{E}_\infty \left[ \frac{1}{n} \sum_{k = 1}^{n - 1} S_k^i\right].
\end{split}
\end{equation}
Clearly, $\textup{POR}_i (\Psi) \leq 1$. If experiment $i$ is performed all the time in a policy, then $\textup{POR}_i (\Psi) = 1$. However, we avoid this case because this scenario will be too costly for the decision-maker. Hence, we consider the case where $\textup{POR}_i (\Psi) < 1$. We note that another way to define $\textup{POR}_i$ is to define it as
$$ \limsup_{n \to \infty} \mathsf{E}_\infty \left[ \frac{1}{n} \sum_{k = 1}^{n - 1} S_k^i \; \Bigg\vert \; \tau \geq n \right]. $$
However, based on the analysis done in \cite{banerjee2013} and \cite{banerjee2015}, it is clear that both expressions lead to similar analysis, with the unconditional metric leading to simpler analysis. The metric defined in \eqref{eq:POR_def} is interpreted as one where the metric is calculated when the observation process continues forever without ever stopping. 

For the mean time to false alarm, also called the average run length to false alarm (ARLFA), we consider the metric used in \cite{lorden1971} and \cite{pollak1985}, and constrain it to be a nonnegative number $\gamma$:
$$ \textup{ARLFA} (\Psi) = \mathsf{E}_\infty [\tau_{\textup{\fontsize{6 pt}{10 pt}\selectfont $m$E}}] \geq \gamma. $$

For the detection delay, we consider Lorden's criterion where we minimize the worst-case average detection delay (WADD) \cite{lorden1971}:
$$ \textup{WADD} (\Psi) = \sup_{\nu \geq 1} \esssup \mathsf{E}_\nu [(\tau - \nu + 1)^+ \mid I_{\nu - 1}]. $$

Our optimization problem is a minimax formulation for the multiple-experiment QCD problem:
\begin{align}
\begin{split}
\min_{\Psi} \quad & \textup{WADD} (\Psi) \\
\text{s.t. } \quad & \textup{ARLFA} (\Psi) \geq \gamma \\
& \textup{POR}_i (\Psi) \leq \beta_i \text{ for } i = 2, \dots, m
\end{split}
\label{eq:mE-qcd}
\end{align}
where $\gamma \geq 0$, $0 < \beta_m < 1$, and $0 \leq \beta_i < 1$ for $i = 2, \dots, m - 1$ are given constants. Note that $\textup{POR}_1 (\Psi) \leq \beta_1$ where $\ds \beta_1 = 1 - \sum_{i = 2}^m \beta_i$.

\begin{remark}
In Section \ref{sec:data-efficient}, we consider the case where $\ds \sum_{i = 1}^m \beta_i < 1$. In this instance, $\ds 1 - \sum_{i = 1}^m \beta_i$ is the fraction of time no experiment is performed. We call this the data-efficiency case of the multiple-experiment QCD problem.
\end{remark}

The challenge is to develop an asymptotically optimal algorithm that meets the constraints on ARLFA and POR. Our proposed solution or algorithm is given in Section \ref{sec:multiple-expt}, where we show that we can indeed design our algorithm to satisfy any set of given constraints on ARLFA and POR. In addition, the detection delay performance (WADD) of the algorithm is within a constant of the WADD of the cumulative sum (CUSUM) algorithm applied only to data from the highest quality experiment. However, in order to appreciate our algorithm and understand its analysis, we first study the special cases of two- and three-experiment designs.


\section{QCD with Two-Experiment Design}
\label{sec:two-expt}

In this section, we consider the special case of the two-experiment QCD problem and propose an algorithm to solve it. We then analyze the algorithm and show its asymptotic optimality. For simplicity of notation, we use variables $X$ and $Y$ to denote the observations from the two experiments. 

Suppose a decision-maker wants to detect the change in a stochastic process by performing one of two experiments at any given time: experiment $X$ or experiment $Y$, with experiment $Y$ having the higher information quality, i.e., Kullback-Leibler divergence (KL-divergence). If the pre- and post-change densities for $X$ are given by $f_0^X$ and $f_1^X$, respectively, and the corresponding densities for $Y$ are $f_0^Y$ and $f_1^Y$, then we have
\begin{equation}
D(f_1^X \mathrel{\Vert} f_0^X) \leq D(f_1^Y \mathrel{\Vert} f_0^Y).
\end{equation}

We wish to minimize detection delay subject to an ARLFA constraint and constraints on the fraction of time each experiment is performed. Let $\Psi$ be the policy for the two-experiment QCD. If $\beta_Y$ is the fraction of time experiment $Y$ is performed and $\textup{POR}_Y$ is the POR of experiment $Y$, then the two-experiment QCD problem is
\begin{align}
\begin{split}
\min_{\Psi} \quad & \textup{WADD} (\Psi) \\
\text{s.t. } \quad & \textup{ARLFA} (\Psi) \geq \gamma \\
& \textup{POR}_Y (\Psi) \leq \beta_Y
\end{split}
\label{eq:2e-qcd}
\end{align}
where $\gamma \geq 0$ and $0 < \beta_Y < 1$ are given constants. Note that $\textup{POR}_X (\Psi) \leq \beta_X$ where $\beta_X = 1 - \beta_Y$.


\subsection{The \textup{2E-CUSUM} Algorithm}
\label{sec:2e-cusum}

The classical Cumulative Sum (CUSUM) algorithm, proposed in \cite{page1954}, can be described as follows:

\begin{algorithm}[CUSUM: $\Psi_{\textup{C}}^X (A) = \Psi_{\textup{C}}^X$]
\label{alg:cusum}

Fix $A > 0$. The \textup{CUSUM} statistic $\{ C_n^X \}$ for experiment $X$ is defined as follows:

\begin{enumerate}
\item Statistic Calculation: We start with $C_0^X = 0$ and update $\{ C_n^X \}$ as
$$ C_{n + 1}^X = \max \{ C_n^X + \ell_X (X_{n + 1}), 0 \}, $$
where $\ell_X (x) = \log \frac{f_1^X (x)}{f_0^X (x)}$.

\item Stopping Rule: Stop at
$$ \tau_{\textup{\fontsize{6 pt}{10 pt}\selectfont C}}^X =\tau_{\textup{\fontsize{6 pt}{10 pt}\selectfont C}}^X(A)= \inf \{ n \geq 1: C_n^X > A \}. $$
\end{enumerate}    
\end{algorithm}

\begin{remark}
\label{rem:CUSUM4Y}
The corresponding variables for experiment $Y$ are $\Psi_{\textup{C}}^Y$, $C_n^Y$, $\ell_Y$, $\tau_{\textup{\fontsize{6 pt}{10 pt}\selectfont C}}^Y$, and $\tau_{\textup{\fontsize{6 pt}{10 pt}\selectfont C}}^Y(A)$. 
\end{remark}

\medskip

We build on this algorithm and allow negative values for the statistic to develop what we call the Two-Experiment CUSUM (2E-CUSUM) algorithm, which minimizes WADD for the two-experiment QCD problem. This approach of using negative values is inspired by the work in \cite{banerjee2013}. The algorithm proposed below needs two important parameters: a scale factor $a_Y$ and the limit $N_X$ for a truncated test. We first discuss the algorithm and then the motivation behind its design, including comparisons with a similar algorithm developed in \cite{ren2017}.  

\begin{algorithm}[2E-CUSUM: $\Psi_{\textup{2E}} (A, a_Y, N_X)$]
\label{alg:2e-cusum}

Fix $A > 0$, $a_Y > 0$, and $N_X \geq 0$. The \textup{2E-CUSUM} statistic $\{ D_n \}$ (``dual-experiment") is defined as follows:

\begin{enumerate}
\item Sampling Control: For $n \geq 0$, the experiment selection control is
$$ S_{n + 1} =
\left\{
\begin{array}{ll}
1 & \text{if } D_n \geq 0 \\
0 & \text{if } D_n < 0.
\end{array}
\right. $$
In other words, at time $n + 1$, the decision-maker performs experiment $Y$ when the statistic $D_n$ is nonnegative, and experiment $X$ when $D_n$ is negative.

\item Statistic Calculation: This is done in several steps:

    \begin{enumerate}
    \item We start with $D_0 = 0$ and update $\{ D_n \}$ using the observations $\{ Y_n \}$ from the higher quality experiment using
    $$ D_{n + 1}  = D_n + \ell_Y (Y_{n + 1}), $$
    where $\ell_Y (y) = \log \frac{f_1^Y (y)}{f_0^Y (y)}$, until 
    $$ \sigma := \sigma^Y (A) = \min \{ n \geq 1: D_n \notin (0, A) \}. $$
    If $D_\sigma > A$, we stop and declare the change.

    \item If $D_\sigma < 0$, we start using observations $\{ X_n \}$ from the lower quality experiment: We scale the undershoot $D_\sigma$ by a factor of $a_Y$, and use $a_Y D_\sigma$ as the new zero level for the \textup{CUSUM} for $X$ and update $D_n$ as
    $$ D_{n + 1} = \max \{ D_n + \ell_X (X_{n + 1}), \; a_Y D_\sigma \}. $$
    This continues until either $D_n$ goes above $0$ or $N_X$ number of observations of $X$ are used. At this time, we reset $D_n$ to $0$ and continue as in step $\textup{a)}$.  A value of $N_X \in (\ell, \ell + 1)$ for integer $\ell \geq 0$ means $N_X = \ell$ with probability $\ell +1 - N_X$ and $N_X = \ell + 1$ with probability $N_X-\ell$.  
    \end{enumerate}

\item Stopping Rule: Stop at the stopping time
$$ \tau_{\textup{\fontsize{6 pt}{10 pt}\selectfont 2E}} = \inf \{ n \geq 1: D_n > A \}. $$
\end{enumerate}    
\end{algorithm}

A typical evolution of the 2E-CUSUM algorithm is illustrated in Figure \ref{fig:2e-cusum}. The achievable $\textup{POR}_Y$ values as a function of the scale factor $a_Y$ and truncation $N_X$, as shown in Figure \ref{fig:por-2e-sim}, show that $\textup{POR}_Y$ decreases rapidly as a function of $a_Y$. This behavior exists because increasing the scale factor $a_Y$ is equivalent to increasing the threshold for the CUSUM algorithm for $X$ observations. Since  $\textup{POR}_Y$ value is calculated under the pre-change hypothesis, the drift of the CUSUM for $X$ is negative, and the time to reach the level $0$ is exponential in $|a_Y D_\sigma|$. The $\textup{POR}_Y$ values were calculated using Monte Carlo simulations. 

\begin{figure}[!t]
\centering
\includegraphics[width = 0.5\linewidth]{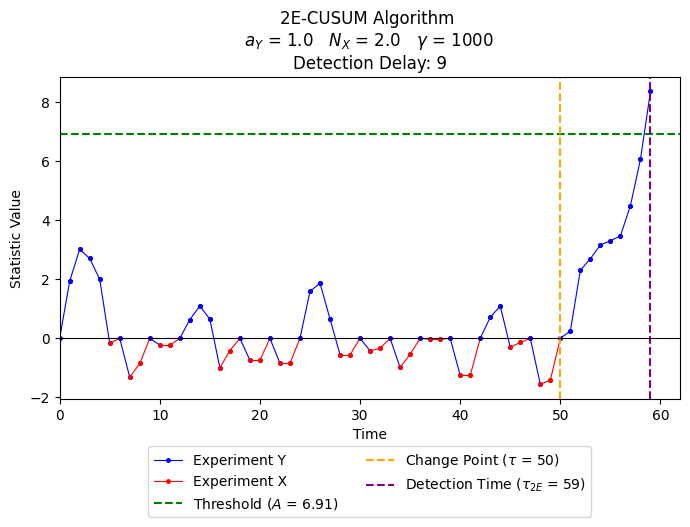}
\caption{An evolution of the statistic $D_n$ of the 2E-CUSUM algorithm for $f_0^X = f_0^Y = \mathcal{N}(0, 1)$, $f_1^X = \mathcal{N}(0.75, 1)$, $f_1^Y = \mathcal{N}(1, 1)$, $\nu = 50$, and $\gamma = 1000$, with $a_Y = 1.0$ and $N_X = 2.0$. The decision-maker performs experiment $Y$ whenever $D_n \geq 0$, and experiment $X$ when $D_n < 0$.}
\label{fig:2e-cusum}
\end{figure}

\begin{figure}[!t]
\centering
\includegraphics[width = 0.5\linewidth]{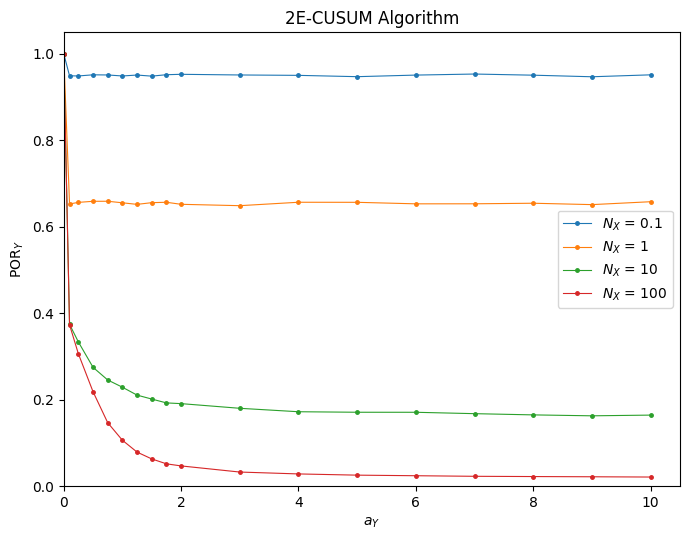}
\caption{$\textup{POR}_Y$ of the 2E-CUSUM algorithm as a function of the scaling factor $a_Y$ for $f_0^X = f_0^Y = \mathcal{N}(0, 1)$, $f_1^X = \mathcal{N}(0.75, 1)$, and $f_1^Y = \mathcal{N}(1, 1)$ with truncation $N_X \in \{ 0.1, 1, 10, 100 \}$. $\textup{POR}_Y$ exponentially decreases as a function of $a_Y$.}
\label{fig:por-2e-sim}
\end{figure}

\begin{remark}
We now comment on the design choices in the \textup{2E-CUSUM} algorithm. 

\begin{itemize}
\item \textup{Special case}: When $N_X = 0$, the \textup{2E-CUSUM} algorithm reduces to the \textup{CUSUM} algorithm for experiment $Y$ (and experiment $X$ is never performed).

\item \textup{Achieving high $\textup{POR}_Y$ values}: Unlike the algorithm proposed in \cite{ren2017}, the \textup{2E-CUSUM} algorithm can be designed to achieve any $\textup{POR}_Y$ value between 0 and 1. This is important in applications where a low value of delay is important, but experiment design is also desired. In these cases, one may choose a $\textup{POR}_Y$ value between 0.5 and 1.0 without significantly affecting the delay. In short, the best trade-off is often achievable for higher $\textup{POR}_Y$ values \cite{banerjee2013}.

\item \textup{Starting the decision process with the $Y$ experiment}: \textup{2E-CUSUM} starts by choosing experiment $Y$ first because we wish to use the better experiment to detect the change quickly if it occurs early. If the change occurs at a later time, then POR constraints imply that it will not matter which experiment is performed first. This is in contrast with the algorithm in \cite{ren2017}, which starts with experiment $X$.

\item \textup{Extension to multiple experiments and data-efficiency}: As compared to the work in \cite{ren2017}, we extend our algorithm to allow for multiple experiments and also allow for data efficiency. As discussed earlier in the paper, in a data-efficient setting, no experiment is performed for a positive fraction of time before the time of change. 
\end{itemize}
\end{remark}


\subsection{Analysis and Optimality of the \textup{2E-CUSUM} Algorithm}
\label{sec:2e-cusum-analysis}

In this section, we analyze the 2E-CUSUM algorithm and prove its optimality. We define some new notations to be used in the theorem. Recall that we defined in Algorithm \ref{alg:2e-cusum}, $\sigma^Y (A) = \min \{ n \geq 1: D_n \notin (0, A) \}$, i.e., the first time the statistic $D_n$ either falls below 0 or crosses $A$ from below. We let $\sigma^Y (\infty) = \sigma^Y$, i.e., the first time the statistic falls below 0. We also define (recall Algorithm~\ref{alg:cusum})
$$ \tau_{\textup{\fontsize{6 pt}{10 pt}\selectfont C}}^X (A, N_X) = \min \{ \tau_{\textup{\fontsize{6 pt}{10 pt}\selectfont C}}^X(A), N_X\}. $$ 
Also, recall from the definition of the 2E-CUSUM algorithm that $D_\sigma$ is the undershoot of the CUSUM statistic for experiment $Y$ when it falls below zero for the first time. Figure \ref{fig:por-ds} illustrates these notations.

\begin{figure}[!t]
\centering
\includegraphics[width = 0.5\linewidth]{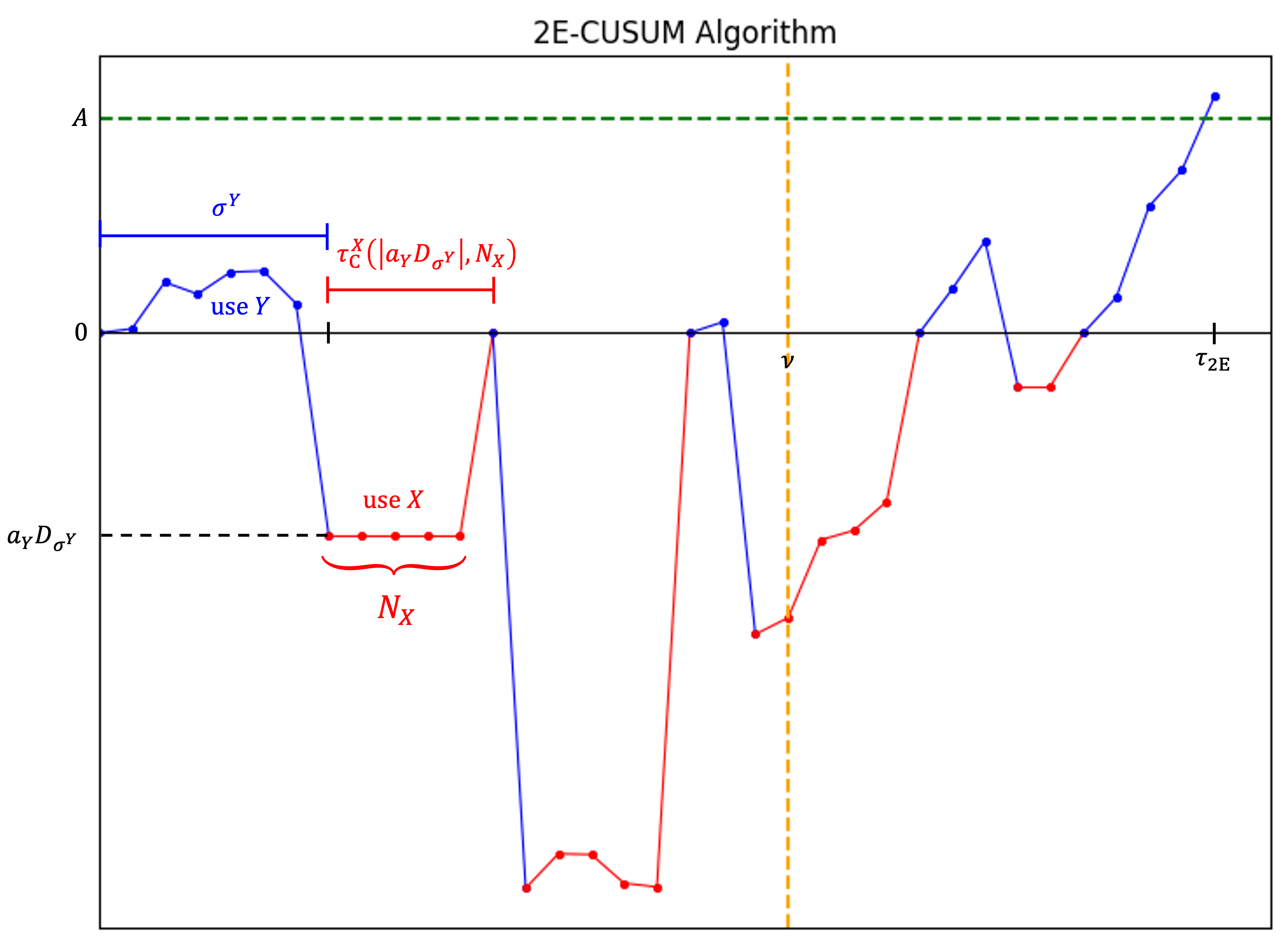}
\caption{The 2E-CUSUM Algorithm. Experiment $Y$ is performed first. When the statistic $D_n$ falls below 0, the undershoot is scaled by a factor of $a_Y$, and experiment $X$ is performed at most $N_X$ times. When the statistic crosses 0 from below, experiment $Y$ is performed again. Using pre-change distributions of $\{ X_n \}$ and $\{ Y_n \}$, $\sigma^Y$ is the length of time experiment $Y$ is performed while $\tau_{\textup{\fontsize{6 pt}{10 pt}\selectfont C}}^X (\vert a_Y D_{\sigma^Y} \vert, N_X)$ is the length of time experiment $X$ is performed before change occurs at time $\nu$. The algorithm stops and declares change at time $\tau_{\textup{\fontsize{6 pt}{10 pt}\selectfont 2E}}$ when the statistic crosses threshold $A$.}
\label{fig:por-ds}
\end{figure}

\begin{theorem}
\label{thm:2e}

For the \textup{2E-CUSUM} algorithm, let
$$ 0 < D(f_1^X \mathrel{\Vert} f_0^X) \leq D(f_1^Y \mathrel{\Vert} f_0^Y) < \infty. $$

\begin{enumerate}
\item For any fixed $a_Y > 0$ and $N_X > 0$, with $A = \log \gamma$,
$$ \textup{ARLFA} (\Psi_{\textup{2E}}) \geq \gamma. $$

\item For fixed values of $A$, $a_Y > 0$, and $N_X > 0$,
\begin{align*}
& \textup{POR}_Y (\Psi_{\textup{2E}} (A, a_Y, N_X)) = \frac{\mathsf{E}_\infty [\sigma^Y]}{\mathsf{E}_\infty [\sigma^Y] + \mathsf{E}_\infty [\tau_{\textup{\fontsize{6 pt}{10 pt}\selectfont C}}^X (\vert a_Y D_\sigma \vert, N_X)]},
\end{align*}
where we recall that $D_\sigma = D_{\sigma^Y}$ is the undershoot. Thus, for any choice of $A$, there exist $a_Y^* > 0$ and $N_X^*>0$ to meet any given POR constraint of $\beta_Y \in (0, 1]$ with equality:
$$ \textup{POR}_Y (\Psi_{\textup{2E}} (A, a_Y^*, N_X^*)) = \beta_Y. $$

\item For fixed values of $a_Y > 0$ and $N_X > 0$, and for each $A$, we have
$$ \textup{WADD} (\tau_{\textup{\fontsize{6 pt}{10 pt}\selectfont 2E}}) = \mathsf{E}_1 [\tau_{\textup{\fontsize{6 pt}{10 pt}\selectfont 2E}}] + N_X. $$
Consequently, we have 
$$ \textup{WADD} (\tau_{\textup{\fontsize{6 pt}{10 pt}\selectfont 2E}}) = \textup{WADD} (\Psi_{\textup{C}}^Y) + K_2, $$
where $\Psi_{\textup{C}}^Y$ is the \textup{CUSUM} algorithm for experiment $Y$ (see Remark~\ref{rem:CUSUM4Y}), and $K_2$ is a constant that is not a function of $A$. 
\end{enumerate}
\end{theorem}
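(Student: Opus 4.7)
The three parts rely on different techniques, but all leverage the fact that the 2E-CUSUM alternates between Y-phases (where $D_n \geq 0$ and Y-observations drive the update) and X-phases (where $D_n < 0$, with X-observations driving the update for at most $N_X$ steps, after which $D_n$ is reset to $0$). For Part 1, the key observation is that $\tau_{\textup{\fontsize{6 pt}{10 pt}\selectfont 2E}}$ can only stop during a Y-phase, since stopping requires $D_n > A > 0$. Restricted to the subsequence of Y-updates, the sub-process behaves exactly like an ordinary CUSUM on the Y-stream: it increments by $\ell_Y$, is restarted at $0$ after every excursion below, and stops upon crossing $A$. Hence the number of Y-observations consumed by $\tau_{\textup{\fontsize{6 pt}{10 pt}\selectfont 2E}}$ has the same distribution as $\tau_{\textup{\fontsize{6 pt}{10 pt}\selectfont C}}^Y(A)$, and Page's classical bound $\mathsf{E}_\infty[\tau_{\textup{\fontsize{6 pt}{10 pt}\selectfont C}}^Y(A)] \geq e^A$ combined with $\tau_{\textup{\fontsize{6 pt}{10 pt}\selectfont 2E}} \geq$ (number of Y-observations used) yields $\textup{ARLFA} \geq \gamma$ when $A = \log \gamma$. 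For Part 2, I would use an alternating-renewal argument: since POR is interpreted on a process that runs forever, the system under $\mathsf{P}_\infty$ is regenerative with regeneration epochs at the resets of $D_n$ to $0$, and each cycle consists of a Y-phase of length $\sigma^Y$ followed by an X-phase of length $\tau_{\textup{\fontsize{6 pt}{10 pt}\selectfont C}}^X(|a_Y D_\sigma|, N_X)$, with $S^Y = 1$ during the Y-phase only. The elementary renewal-reward theorem then yields the formula. For the existence of $(a_Y^*, N_X^*)$ achieving any $\beta_Y \in (0, 1]$, I would show that $\textup{POR}_Y$ is continuous and monotonically decreasing in both $a_Y$ and $N_X$ (continuity in $N_X$ being supplied precisely by the randomized truncation rule for non-integer $N_X$) and attains every value in $(0, 1]$ as $(a_Y, N_X)$ varies, and then apply the intermediate value theorem.

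Part 3 is the most delicate. For the upper bound on WADD, I would argue that from any $I_{\nu - 1}$, at most $N_X$ additional X-steps can occur before $D_n$ is reset to $0$; from that reset point, the statistic evolves as a fresh 2E-CUSUM driven by post-change observations, with expected stopping time $\mathsf{E}_1[\tau_{\textup{\fontsize{6 pt}{10 pt}\selectfont 2E}}]$ by the strong Markov property. For the matching lower bound, I would exhibit pre-change histories of positive $\mathsf{P}_\infty$-probability that approach this worst case: histories in which $\nu - 1$ is the first X-observation of a newly started X-phase and $|D_\sigma|$ is large enough that no sequence of $N_X$ post-change X-observations can push $D_n$ above $0$. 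Since $D_\sigma$ has unbounded left support under the continuous distributions assumed, such histories lie on a set of positive measure, giving the equality $\textup{WADD}(\tau_{\textup{\fontsize{6 pt}{10 pt}\selectfont 2E}}) = N_X + \mathsf{E}_1[\tau_{\textup{\fontsize{6 pt}{10 pt}\selectfont 2E}}]$. For the corollary identity, I would invoke Lorden's result $\textup{WADD}(\Psi_{\textup{C}}^Y) = \mathsf{E}_1[\tau_{\textup{\fontsize{6 pt}{10 pt}\selectfont C}}^Y]$ and bound $\mathsf{E}_1[\tau_{\textup{\fontsize{6 pt}{10 pt}\selectfont 2E}}] - \mathsf{E}_1[\tau_{\textup{\fontsize{6 pt}{10 pt}\selectfont C}}^Y]$ uniformly in $A$: under post-change observations $\ell_Y$ has strictly positive drift, so the probability that a Y-excursion starting at $0$ drops below $0$ before crossing $A$ is bounded away from $1$ by a constant independent of $A$, which means the expected number of post-change X-phases is $O(1)$ and each contributes at most $N_X$ to the delay. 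Hence $K_2 := N_X + (\mathsf{E}_1[\tau_{\textup{\fontsize{6 pt}{10 pt}\selectfont 2E}}] - \mathsf{E}_1[\tau_{\textup{\fontsize{6 pt}{10 pt}\selectfont C}}^Y])$ is independent of $A$.

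The main obstacle will be Part 3: making the essential-supremum calculation rigorous (identifying the worst-case histories and verifying they have positive probability) and controlling the gap $\mathsf{E}_1[\tau_{\textup{\fontsize{6 pt}{10 pt}\selectfont 2E}}] - \mathsf{E}_1[\tau_{\textup{\fontsize{6 pt}{10 pt}\selectfont C}}^Y]$ uniformly in $A$. Parts 1 and 2 reduce essentially to classical tools---Page's CUSUM bound applied to the embedded Y-subprocess, and renewal-reward applied to the natural Y/X cycle decomposition.
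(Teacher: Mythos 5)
Your proposal is correct and follows essentially the same route as the paper's proof: Part 1 via the embedded Y-subprocess being a standard CUSUM together with Lorden's bound $\mathsf{E}_\infty[\tau_{\textup{C}}^Y(A)] \geq e^A$, Part 2 via the renewal-reward theorem applied to the reset-to-zero cycles (the paper establishes the range of achievable $\textup{POR}_Y$ by the same limiting arguments in $a_Y$ and $N_X$ that you would feed into the intermediate value theorem), and Part 3 via the worst-case unbounded undershoot forcing all $N_X$ samples of $X$ to be exhausted, followed by the observation that post-change excursions below zero contribute only an $A$-independent constant. Your treatment is somewhat more explicit than the paper's at the lower-bound step of Part 3 and in controlling the gap $\mathsf{E}_1[\tau_{\textup{\fontsize{6 pt}{10 pt}\selectfont 2E}}] - \mathsf{E}_1[\tau_{\textup{\fontsize{6 pt}{10 pt}\selectfont C}}^Y]$, but the underlying decomposition and key lemmas are identical.
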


\begin{proof}
\begin{enumerate}
\item It is well known that
$$ \mathsf{E}_\infty [\tau_{\textup{\fontsize{6 pt}{10 pt}\selectfont C}}^Y (A)] \geq e^A, $$
i.e., the expected length of time it takes for the CUSUM statistic (for $X$, $Y$, or any other experiment) to cross the threshold $A$, under the pre-change model, is lower-bounded by $e^A$ \cite{lorden1971}. Because of the independence of observations, the sojourns of the statistic $D_n$ below zero, and the time spent conducting the experiment $X$, only add to the time it takes to hit the threshold $A$. Thus, 
$$ \mathsf{E}_\infty [\tau_{\textup{\fontsize{6 pt}{10 pt}\selectfont 2E}}] \geq \mathsf{E}_\infty [\tau_{\textup{\fontsize{6 pt}{10 pt}\selectfont 2E}} (S_n = 1 \;\; \forall n)] = \mathsf{E}_\infty [\tau_{\textup{\fontsize{6 pt}{10 pt}\selectfont C}}^Y (A)] \geq e^A. $$
Setting $A = \log \gamma$, we get
$$ \mathsf{E}_\infty [\tau_{\textup{\fontsize{6 pt}{10 pt}\selectfont 2E}}] \geq e^{\log \gamma} = \gamma. $$

\item This result follows from the Renewal Reward Theorem because every time the statistic $D_n$ of the 2E-CUSUM algorithm goes above zero from below, it is reset to zero. This creates a renewal cycle with the reward being the time spent by statistic $D_n$ above $0$. Under the pre-change distribution, the expected number of observations of experiment $Y$ is $\mathsf{E}_\infty [\sigma^Y]$. Now, at time $\sigma^Y$, the new zero-level for the CUSUM for experiment $X$ is $a_Y D_\sigma$. Hence, the expected number of observations used of experiment $X$ before it crosses zero from below is $\mathsf{E}_\infty [\tau_{\textup{\fontsize{6 pt}{10 pt}\selectfont C}}^X (\vert a_Y D_\sigma \vert, N_X)]$. Therefore,
$$ \textup{POR}_Y (\tau_{\textup{\fontsize{6 pt}{10 pt}\selectfont 2E}}) = \frac{\mathsf{E}_\infty [\sigma^Y]}{\mathsf{E}_\infty [\sigma^Y] + \mathsf{E}_\infty [\tau_{\textup{\fontsize{6 pt}{10 pt}\selectfont C}}^X (\vert a_Y D_\sigma \vert, N_X)]}. $$
Moreover,
$$ \textup{POR}_X (\tau_{\textup{\fontsize{6 pt}{10 pt}\selectfont 2E}}) = 1 - \textup{POR}_Y (\tau_{\textup{\fontsize{6 pt}{10 pt}\selectfont 2E}}). $$
Finally, note that as $N_X \to 0$,
$$ \mathsf{E}_\infty [\tau_{\textup{\fontsize{6 pt}{10 pt}\selectfont C}}^X (\vert a_Y D_\sigma \vert, N_X)] \to 0 \quad \text{and} \quad \textup{POR}_Y \to 1. $$
Also, as $N_X \to \infty$ and $a_Y \to \infty$,
$$ \mathsf{E}_\infty [\tau_{\textup{\fontsize{6 pt}{10 pt}\selectfont C}}^X (\vert a_Y D_\sigma \vert, N_X)] \to \infty \quad \text{and} \quad \textup{POR}_Y \to 0. $$

\item Suppose the change occurs at time $\nu = 2$. The essential supremum operation in the WADD delay metric would cause the observation at time 1 to be such that the undershoot $D_\sigma \to \infty$. Then the only way for the statistic $D_n$ to go above zero would be to exhaust all the sample values $N_X$ designated for $X$. Similar situations when $\nu = 3, 4, \dots$ would give us
\begin{align*}
\textup{WADD} (\tau_{\textup{\fontsize{6 pt}{10 pt}\selectfont 2E}})
& = \mathsf{E}_1 [\tau_{\textup{\fontsize{6 pt}{10 pt}\selectfont 2E}}] + N_X \\
& = (\mathsf{E}_1 [\tau_{\textup{\fontsize{6 pt}{10 pt}\selectfont C}}^Y] + K) + N_X \\
& = \textup{WADD} (\Psi_{\textup{C}}^Y) + K + N_X
\end{align*}
where $K$ is a constant, not a function of the threshold $A$. The second equality follows because after the change occurs, and after a finite number of bounded excursions below $0$, the statistic $D_n$ would never go below 0 \cite{banerjee2013}. 
\end{enumerate}
\end{proof}

The above theorem implies strong optimality properties of the 2E-CUSUM algorithm. 

\begin{corollary}
Under the conditions of the theorem, suppose we set $A = \log \gamma$ to satisfy 
$$ \textup{ARLFA} (\Psi_{\textup{2E}}) \geq \gamma. $$
Also, we set $a_Y$ and $N_X$ to satisfy
$$ \textup{POR}_Y (\Psi_{\textup{2E}}) \leq \beta_Y. $$
Then as $\gamma \to \infty$,
$$ \textup{WADD} (\Psi_{\textup{2E}} (A, a_Y, N_X)) \sim \textup{WADD} (\Psi_{\textup{C}}) \sim \frac{\log \gamma}{D(f_1^Y \mathrel{\Vert} f_0^Y)}. $$
This shows that the \textup{2E-CUSUM} algorithm is asymptotically optimal for the two-experiment QCD problem stated in \eqref{eq:mE-qcd}.
\end{corollary}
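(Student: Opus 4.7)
The plan is to chain together the delay identity from Theorem~\ref{thm:2e}(3) with the classical first-order asymptotic for Page's CUSUM rule, and then to match the resulting upper bound with a universal information-theoretic lower bound. A key observation that makes this work is that once the target POR level $\beta_Y \in (0,1)$ is fixed, the parameters $a_Y$ and $N_X$ needed to meet $\textup{POR}_Y \leq \beta_Y$ can be chosen as constants independent of $\gamma$; only the threshold $A = \log\gamma$ grows with $\gamma$.

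For the two $\sim$ statements, I would start from Theorem~\ref{thm:2e}(3), which gives
\[ \textup{WADD}(\Psi_{\textup{2E}}(A,a_Y,N_X)) = \textup{WADD}(\Psi_{\textup{C}}^Y) + K_2, \]
with $K_2$ independent of $A$. Invoking Lorden's classical first-order asymptotic for CUSUM with $A = \log\gamma$, namely
\[ \textup{WADD}(\Psi_{\textup{C}}^Y) = \frac{\log\gamma}{D(f_1^Y \mathrel{\Vert} f_0^Y)}(1+o(1)), \]
and using $K_2/\log\gamma \to 0$, both equivalences $\textup{WADD}(\Psi_{\textup{2E}}) \sim \textup{WADD}(\Psi_{\textup{C}}^Y) \sim \log\gamma/D(f_1^Y \mathrel{\Vert} f_0^Y)$ follow immediately.

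The optimality claim further requires a matching lower bound: for every admissible policy $\Psi$ with $\textup{ARLFA}(\Psi)\geq\gamma$, one must have $\textup{WADD}(\Psi) \geq (\log\gamma/D(f_1^Y \mathrel{\Vert} f_0^Y))(1+o(1))$. I would obtain this by reducing to the single-experiment Lorden lower bound, noting that the log-likelihood ratio under any admissible sampling control has post-change mean increment at most $D(f_1^Y \mathrel{\Vert} f_0^Y)$, the maximum KL divergence available across the two experiments. A coupling with a fictitious observer that samples $Y$ at every step shows that no multi-experiment policy can accumulate evidence faster, so Lorden's universal lower bound applied to the all-$Y$ trajectory closes the gap. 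The main technical obstacle is handling the history-dependent sampling rule cleanly, since the standard Lorden inequality is stated for i.i.d.\ observations from a fixed distribution; I expect this to be surmountable by working directly with the stopped log-likelihood process and using a change-of-measure identity whose only input is the per-step KL-increment bound $D(f_1^Y \mathrel{\Vert} f_0^Y)$, which recovers the classical proof verbatim once the substitution is made.
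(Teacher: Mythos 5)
Your proposal is correct and follows essentially the same route as the paper: the two asymptotic equivalences are obtained exactly as you describe, by combining the identity $\textup{WADD}(\Psi_{\textup{2E}}) = \textup{WADD}(\Psi_{\textup{C}}^Y) + K_2$ from Theorem~\ref{thm:2e}(3) (with $a_Y, N_X$ fixed independently of $\gamma$, which is legitimate since the POR expression does not involve $A$) with the first-order CUSUM asymptotic. For the optimality claim the paper simply appeals to the general theory of \cite{lai1998information} — arguing that any policy using $X$ a positive fraction of time post-change accumulates information at a rate given by a convex combination of the two KL numbers, hence no faster than $D(f_1^Y \mathrel{\Vert} f_0^Y)$ — which is the same per-step information bound underlying your change-of-measure sketch; your version is somewhat more explicit about how to handle the history-dependent sampling rule, but it is not a different argument.
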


\begin{proof}
The asymptotic performance follows from the performance of the CUSUM algorithm and the theorem statement. For the statement on asymptotic optimality, we note that if experiment $X$ is used for a fixed and positive fraction of time after the change, and all the way to the stopping, then the information number accumulated over time will be a convex combination of the KL-divergence for $X$ and the KL-divergence for $Y$. According to the general theory of change detection developed in \cite{lai1998information}, the delay performance will then be inversely proportional to this convex combination, resulting in a higher delay. Thus, any alternative solution to this problem must ultimately utilize $Y$, resulting in the delay we obtain with the 2E-CUSUM algorithm. 
\end{proof}


\subsection{Performance Plots for the 2E-CUSUM Algorithm}

In this subsection, we provide delay-false alarm trade-off curves for the 2E-CUSUM algorithm computed using Monte Carlo simulations. For ARLFA, each simulation of an algorithm is run using only pre-change distributions and, once the algorithm ends, the termination time is recorded. ARLFA is the average of the simulated termination times. WADD is computed in the same manner but using only post-change distributions. To get the value of the WADD metric, we add $N_X$ to the computed average termination times. These values are reported in Figure~\ref{fig:ds-por-compare} for different choices of the $\text{POR}_Y$ constraint. As can be seen from the figure, as the $\text{POR}_Y$ value goes to $1$, the performance of the 2E-CUSUM algorithm approaches that of the CUSUM algorithm using experiment $Y$. 

\begin{figure}[!t]
\centering
\includegraphics[width = 0.5\linewidth]{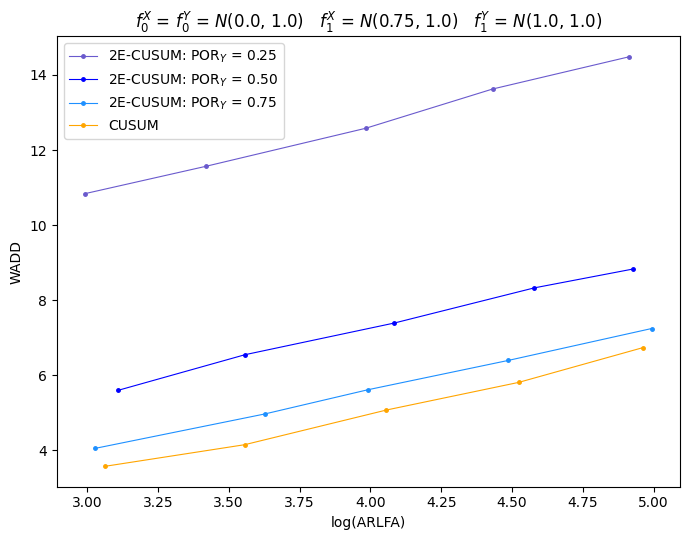}
\caption{WADD vs $\log (\textup{ARLFA})$ graphs of the 2E-CUSUM algorithm for $f_0^X = f_0^Y = \mathcal{N}(0, 1)$, $f_1^X = \mathcal{N}(0.75, 1)$, and $f_1^Y = \mathcal{N}(1, 1)$ and different $\textup{POR}_Y$ levels. The 2E-CUSUM algorithm performs better at a higher $\textup{POR}_Y$ level.}
\label{fig:ds-por-compare}
\end{figure}

We also compare the performance of our algorithm with an alternative way to solve the two-experiment problem, which we call the Random Switch Scheme (RSS). To evaluate the performance of the 2E-CUSUM algorithm against that of the CUSUM algorithm and the RSS algorithm, we plot their WADD vs $\log (\textup{ARLFA})$ graphs for different values of detection threshold $A$. The RSS algorithm works as follows. The statistic $D_n$ starts at 0, and $D_n$ evolves according to CUSUM using observations from either experiment $Y$ or experiment $X$. The decision-maker performs experiment $Y$ first. After updating $D_n$, a (possibly biased) coin is tossed: if it lands on heads, experiment $Y$ is done next; otherwise, experiment $X$ is performed. The process continues until $D_n \geq A$, at which point the algorithm stops and the decision-maker declares that a change has occurred. Figure \ref{fig:trade-off-ds} shows the WADD vs $\log (\textup{ARLFA})$ graphs of the 2E-CUSUM, CUSUM, and RSS algorithms. To allow proper comparison of the 2E-CUSUM and RSS algorithms, we set their parameters to correspond to the same POR levels of $\textup{POR}_Y = \textup{POR}_X = 0.50$. Among the three algorithms, since the CUSUM algorithm uses more informative observations all the time, it has the lowest detection delay. We also observe that the 2E-CUSUM algorithm outperforms the RSS algorithm, as evidenced by the former's lower WADD.

\begin{figure}[!t]
\centering
\includegraphics[width = 0.5\linewidth]{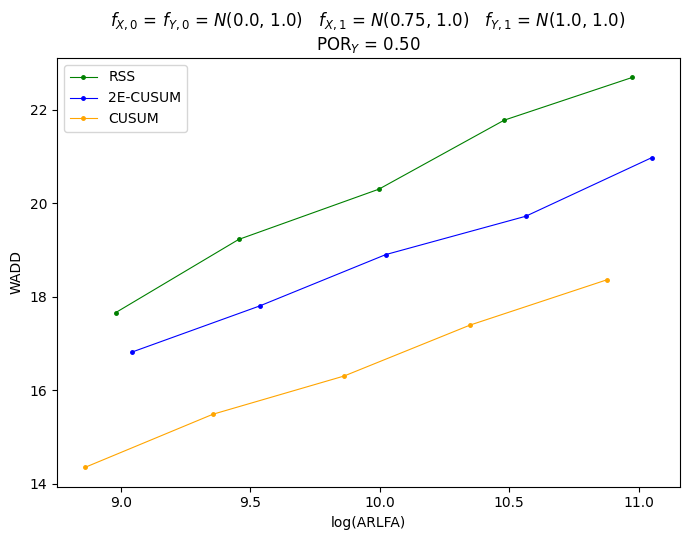}
\caption{WADD vs $\log (\textup{ARLFA})$ graphs of the 2E-CUSUM, CUSUM, and RSS algorithms for $f_0^X = f_0^Y = \mathcal{N}(0, 1)$, $f_1^X = \mathcal{N}(0.75, 1)$, and $f_1^Y = \mathcal{N}(1, 1)$. For the 2E-CUSUM algorithm, we set the parameters at $a_Y = 1.0$ and $N_X = 2.0$, corresponding to $\textup{POR}_Y = \textup{POR}_X = 0.50$. For the RSS algorithm, we set the probability of performing experiment $Y$ at $p_Y = 0.50$. The CUSUM algorithm has the lowest detection delay. The 2E-CUSUM algorithm outperforms the RSS algorithm.}
\label{fig:trade-off-ds}
\end{figure}


\subsection{Truncated \textup{2E-CUSUM} Algorithm}
\label{sec:trunc-2e}

In Section~\ref{sec:three-expt}, we will discuss our solution to QCD with the design of three experiments. There, we will design the three-experiment algorithm using the algorithm for two experiments. In Section~\ref{sec:multiple-expt}, we will show that the algorithm for QCD with four experiments can be defined using the algorithm for three experiments. Finally, the algorithm for $m$ experiments is defined using the algorithm for $m-1$ experiments. To define the algorithms in this iterative manner, we need and use a notion of a truncated test. There are many ways to define truncation, including hard truncation at a fixed time. The truncation we define here, however, is a softer version. In this subsection, we define a truncated version of the 2E-CUSUM algorithm. 

Observe that in the 2E-CUSUM algorithm, we do not limit the number of consecutive observations of experiment $Y$, but we limit that of experiment $X$ by $N_X$. We define now the truncated 2E-CUSUM algorithm as the 2E-CUSUM algorithm wherein we also limit the number of consecutive observations of experiment $Y$ by $N_Y$. While performing experiment $Y$, if the statistic crosses the detection threshold or $N_Y$ observations are reached, we stop and declare a change.

\begin{algorithm}[Truncated 2E-CUSUM: $\Psi_{\textup{2E}} (A, a_Y, N_X, N_Y)$]
\label{alg:truncated_2e-cusum}

Fix $A > 0$, $a_Y > 0$, $N_X \geq 0$, and $N_Y \geq 0$. The truncated \textup{2E-CUSUM} statistic $\{ D_n \}$ is defined as follows:

\begin{enumerate}
\item Sampling Control: For $n \geq 0$, the experiment selection control is
$$ S_{n + 1} =
\left\{
\begin{array}{ll}
1 & \text{if } D_n \geq 0 \\
0 & \text{if } D_n < 0
\end{array}
\right. $$
In other words, at time $n + 1$, the decision-maker performs experiment $Y$ when the statistic $D_n$ is nonnegative, and experiment $X$ if $D_n$ is negative.

\item Statistic Calculation: This is done in several steps:

    \begin{enumerate}
    \item We start with $D_0 = 0$ and update $\{ D_n \}$ using the observations $\{ Y_n \}$ from the higher quality experiment using
    $$ D_{n + 1}  = D_n + \ell_Y (Y_{n + 1}), $$
    where $\ell_Y (y) = \log \frac{f_1^Y (y)}{f_0^Y (y)}$, until either time
    $$ \sigma := \sigma^Y(A) = \min \{ n \geq 1: D_n \notin (0, A) \} $$
    or $N_Y$ number of observations of $Y$ are used. If $D_\sigma > A$ or $N_Y$ has been reached, we stop and declare the change.
    
    \item If $D_\sigma < 0$, we start using observations $\{ X_n \}$ from the lower quality experiment: we scale the  undershoot $D_\sigma$ by a factor of $a_Y$, and use $a_Y D_\sigma$ as the new zero level for the \textup{CUSUM} for $X$ and update $D_n$ as
    $$ D_{n + 1} = \max \{ D_n + \ell_X (X_{n + 1}), \;  a_Y D_\sigma \}. $$
    This continues until either $D_n$ goes above $0$ or $N_X$ number of observations of $X$ are used. At this time, we reset $D_n$ to $0$ and continue as in step $\textup{a)}$.
    \end{enumerate}

\item Stopping Rule: Stop at the stopping time
$$ \tau_{\textup{\fontsize{6 pt}{10 pt}\selectfont 2E}}(A, a_Y, N_X, N_Y) = \inf \{ n \geq 1: D_n > A \}. $$
\end{enumerate}    
\end{algorithm}

We note that instead of using the stopping rule notation $\tau_{\textup{\fontsize{6 pt}{10 pt}\selectfont 2E}}$ that we used for the 2E-CUSUM algorithm, we use the notation
$$
\tau_{\textup{\fontsize{6 pt}{10 pt}\selectfont 2E}}(A, a_Y, N_X, N_Y)
$$
to denote the stopping rule for the truncated 2E-CUSUM algorithm. In this new notation, note that
$$
\tau_{\textup{\fontsize{6 pt}{10 pt}\selectfont 2E}} = \tau_{\textup{\fontsize{6 pt}{10 pt}\selectfont 2E}}(A, a_Y, N_X, \infty). 
$$
It is clear from the description that 
$$ \tau_{\textup{\fontsize{6 pt}{10 pt}\selectfont 2E}}(A, a_Y, N_X, N_Y) \leq N_Y + N_Y N_X. $$


\vspace{1cm}
\section{QCD with Three-Experiment Design}
\label{sec:three-expt}

In this section, we consider the case where the decision-maker can choose between three experiments $X$, $Y$, or $Z$, where experiment $Z$ has the highest information quality while experiment $X$ has the lowest, i.e.,
\begin{equation}
D(f_1^X \mathrel{\Vert} f_0^X) \leq D(f_1^Y \mathrel{\Vert} f_0^Y) \leq D(f_1^Z \mathrel{\Vert} f_0^Z). 
\end{equation}
We first formulate the three-experiment QCD problem. We then present the Three-Experiment CUSUM (3E-CUSUM) algorithm, which solves the problem and is an extension of the 2E-CUSUM algorithm. Finally, the mathematical analysis of the algorithm and its optimality are presented.

Motivated by the design and optimality of the 2E-CUSUM algorithm, the algorithm for a three-experiment QCD problem will start with the CUSUM statistic for experiment $Z$, and when the statistic goes below zero, it will switch to the next costly experiment $Y$. At this point, we need to address, among other issues, how to switch to $X$, and more importantly, how to truncate the use of $Y$ and $X$ precisely to make sure that the POR constraints can be met with equality. At the same time, we need to make sure that WADD for the stopping rule is finite. The goal of this section is to address these issues in a precise manner.

The three-experiment QCD problem is similar to the two-experiment case, except we impose an additional constraint on the fraction of time experiment $Y$ is performed. By extending the definitions of the variables defined in the previous section, our optimization problem is a minimax formulation for the three-experiment QCD problem:
\begin{align}
\begin{split}
\min_{\Psi} \quad & \textup{WADD} (\Psi), \\
\text{s.t. } \quad & \textup{ARLFA} (\Psi) \geq \gamma, \\
& \textup{POR}_Z (\Psi) \leq \beta_Z, \\
& \textup{POR}_Y (\Psi) \leq \beta_Y,
\end{split}
\end{align}
where $\gamma \geq 0$, $0 < \beta_Z < 1$, and $0 \leq \beta_Y < 1$ are given constants. Note that $\textup{POR}_X (\Psi) \leq \beta_X$ where $\beta_X = 1 - \beta_Y - \beta_Z$.


\subsection{The \textup{3E-CUSUM} Algorithm}
\label{sec:3e-cusum}

We now propose the 3E-CUSUM algorithm. Recall from Section~\ref{sec:trunc-2e} that we use the notation 
$\Psi_{\textup{2E}} (A, a_Y, N_X, N_Y)$ to denote the truncated 2E-CUSUM algorithm with corresponding stopping rule defined as $\tau_{\textup{\fontsize{6 pt}{10 pt}\selectfont 2E}}(A, a_Y, N_X, N_Y)$. 

We use the important notation of
$$ \Psi_{\textup{2E}} (b,  A, a_Y, N_X, N_Y) $$
which is the truncated 2E-CUSUM starting at $b$. Here, the switch from $Y$ to $X$ occurs at $b$ (instead of zero), and the stopping threshold is $A$.  We also use 
$$ \tau_{\textup{\fontsize{6 pt}{10 pt}\selectfont 2E}}(b, A, a_Y, N_X, N_Y) $$
to denote the stopping rule of $\Psi_{\textup{2E}} (b, A, a_Y, N_X, N_Y)$.


\begin{algorithm}[3E-CUSUM: $\Psi_{\textup{3E}} (A, a_Y, a_Z, N_X, N_Y)$]
\label{alg:3e-cusum}

Fix $A > 0$, $a_Y > 0$, $a_Z > 0$, $N_X \geq 0$, and $N_Y \geq 0$. The \textup{3E-CUSUM} statistic $\{ D_n \}$ is defined as follows:

\begin{enumerate}
\item We start with $D_0 = 0$ and update $\{ D_n \}$ using the observations $\{ Z_n \}$ from the highest quality experiment using
$$ D_{n + 1}  = D_n + \ell_Z (Z_{n + 1}), $$
where $\ell_Z (y) = \log \frac{f_1^Z (z)}{f_0^Z (z)}$, until 
$$ \sigma := \sigma^Z (A) = \min \{ n \geq 1: D_n \notin (0, A) \}. $$
If $D_{\sigma^Z} > A$, we stop and declare the change.

\item If $D_{\sigma^Z} < 0$, we start using observations $\{ Y_n \}$ and $\{ X_n \}$ from the lower-quality experiments and execute the truncated 2E-CUSUM algorithm for $Y$ and $X$ as follows:

    \begin{enumerate}
    \item We first scale the undershoot $D_\sigma$ by a factor $a_Z$, and use $a_Z D_{\sigma^Z}$ as the new zero level for the \textup{2E-CUSUM} algorithm for $Y$ and $X$.
    
    \item With $a_Z D_{\sigma^Z}$ as the new zero level, the stopping threshold is set at 0.
    
    \item We update the statistic $\{D_n\}$ starting at time $\sigma^Z (A)$ using the truncated \textup{2E-CUSUM} policy \\
    $\Psi_{\textup{2E}} (a_Z D_{\sigma^Z}, 0, a_Y, N_X, N_Y)$.
    
    \item When the policy $\Psi_{\textup{2E}} (a_Z D_{\sigma^Z}, 0, a_Y, N_X, N_Y)$ terminates, the statistic $D_n$ must cross 0 from below. At this time, we reset the statistic $D_n$ to $0$ and go to Step $1)$. 
    \end{enumerate}
\end{enumerate}    
\end{algorithm}

A typical evolution of the 3E-CUSUM algorithm is illustrated in Figure \ref{fig:3e-cusum}. For completeness, we describe the 3E-CUSUM algorithm in words. In this algorithm, the statistic $D_n$ starts at 0, and the decision-maker performs experiment $Z$. Thus, initially, the statistic $D_n$ evolves according to the CUSUM statistic from experiment $Z$ until either it crosses the threshold $A$ or it falls below $0$. If it crosses $A$, we stop and declare the change. If the statistic falls below $0$, we stop using experiment $Z$ and switch to experiment $Y$. 

\begin{figure}[!t]
\centering
\includegraphics[width = 0.5\linewidth]{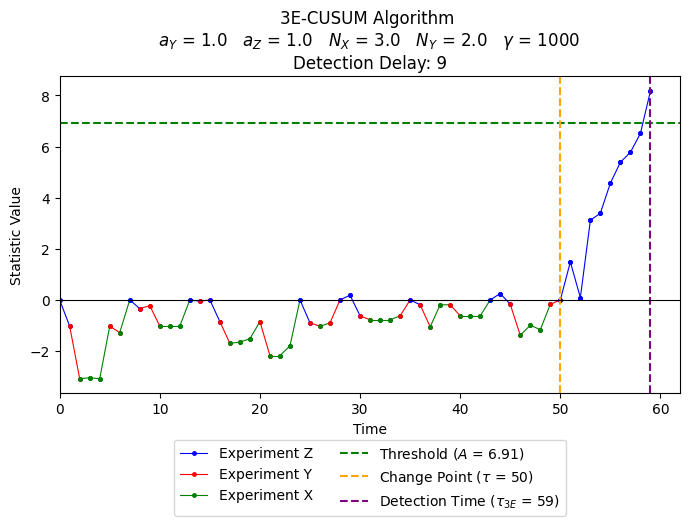}
\caption{An evolution of the statistic $D_n$ of the 3E-CUSUM algorithm for $f_0^X = f_0^Y = f_0^Z = \mathcal{N}(0, 1)$, $f_1^X = \mathcal{N}(0.5, 1)$, $f_1^Y = \mathcal{N}(0.75, 1)$, $f_1^Z = \mathcal{N}(1, 1)$, $\nu = 50$, and $\gamma = 1000$, with $a_Y = a_Z = 1.0$, $N_X = 3.0$, and $N_Y = 2.0$. The decision-maker performs experiment $Z$ whenever $D_n \geq 0$, experiment $Y$ when $D_n$ is below 0 and above its threshold, and experiment $X$ when $D_n$ is below the threshold for experiment $Y$.}
\label{fig:3e-cusum}
\end{figure}

The undershoot $D_{\sigma^Z}$ of the statistic $D_n$ when it goes below $0$ is scaled by a factor of $a_Z > 0$. This scaled value of the undershoot $ a_Z D_{\sigma^Z}$ becomes the new level zero for a CUSUM statistic update using experiment $Y$. The statistic $D_n$ is updated using the CUSUM statistic for $Y$, until either it goes above zero or goes below the new zero level $ a_Z D_{\sigma^Z}$. If the statistic goes above $0$, we reset it to $0$ and switch back to experiment $Z$. At this point, the process renews itself. 

If, while updating $D_n$ using the CUSUM update for $Y$, the statistic goes below $a_Z D_{\sigma^Z}$ (without ever becoming positive), then we switch to experiment $X$. The undershoot, say $U_Y$, at this time is scaled by $a_Y$, and $a_Z D_{\sigma^Z} + a_Y U_Y$ becomes the new level zero for the CUSUM statistic using experiment $X$. 

The update with experiment $X$ is continued, with statistic $D_n$ reflected at $a_Z D_{\sigma^Z} + a_Y U_Y$, until the statistic either goes above $ a_Z D_{\sigma^Z}$ or consumes $N_X$ observations. In both cases, we switch back to $Y$. 

Every time we start using the data from experiment $Y$, we start a counter with $N_Y$ and count down each time we use an observation of $Y$. While switching between $Y$ and $X$, if we consume $N_Y$ observations of $Y$, we switch back to $Z$. We note, however, that when using the $N_Y$th observation, we use the strategy, discussed before Algorithm~\ref{alg:3e-cusum}, employed by the truncated 2E-CUSUM algorithm.

When $N_X = 0$, the 3E-CUSUM algorithm simply becomes the 2E-CUSUM algorithm for experiments $Y$ and $Z$ (and experiment $X$ is never chosen). On the other hand, when $N_Y = 0$, 3E-CUSUM reduces to CUSUM for experiment $Z$.


\subsection{Achieving Different POR values}

We now show through simulations that we can design the 3E-CUSUM algorithm to achieve various choices of constraints on the fraction of time each experiment is performed. This fact is also established analytically in the next subsection using the Renewal Reward Theorem to analyze the POR metrics. 

We used Monte Carlo simulations to compute the PORs of 3E-CUSUM. Figure \ref{fig:por-ts-sim} and Table \ref{tab:por-ts-sim} show that various combinations of PORs can be achieved using 3E-CUSUM. Note that when $\beta_Y = 0$, then $\beta_X = 0$, i.e., if experiment $Y$ cannot be performed, then experiment $X$ cannot be chosen as well. When $\beta_X = 0$, then $\beta_Y \geq 0$, and experiment $Y$ can be performed even if experiment $X$ is not utilized.

\begin{figure}[!th]
\centering
\includegraphics[width = 0.5\linewidth]{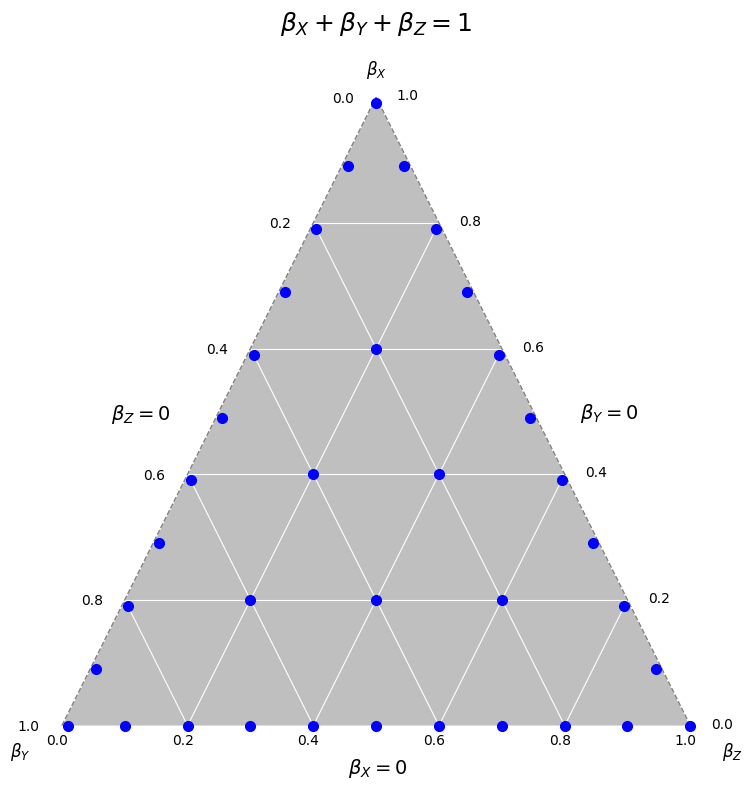}
\caption{Different combinations of target PORs for experiments $X$, $Y$, and $Z$. The blue dots represent combinations of target $\beta_X$, $\beta_Y$, and $\beta_Z$ for which we determined the parameters of the 3E-CUSUM algorithm, which approximately attains the target PORs (see Table \ref{tab:por-ts-sim}). The parameters of the 3E-CUSUM algorithm can be tuned to obtain any target PORs.}
\label{fig:por-ts-sim}
\end{figure}

\begin{table*}[!th]
\renewcommand{\arraystretch}{1.0}
\caption{Parameter values and POR results for the 3E-CUSUM algorithm given various target PORs for $f_0^X = f_0^Y = f_0^Z = \mathcal{N}(0, 1)$, $f_1^X = \mathcal{N}(0.5, 1)$, $f_1^Y = \mathcal{N}(0.75, 1)$, and $f_1^Z = \mathcal{N}(1, 1)$. The parameters of the 3E-CUSUM algorithm can be tuned to obtain any target PORs.}
\label{tab:por-ts-sim}
\centering
\begin{tabular}{ccc|cccc|ccc}
\multicolumn{3}{c}{Target} & \multicolumn{4}{c}{Parameters} & \multicolumn{3}{c}{Ouput} \\
\hline
$\beta_X$ & $\beta_Y$ & $\beta_Z$ & $a_Y$ & $a_Z$ & $N_X$ & $N_Y$ & $\textup{POR}_X$ & $\textup{POR}_Y$ & $\textup{POR}_Z$ \\
\hline
0.00 & 0.99 & 0.01 & 1.00 & 100.00 & 0.00 & 200.00 & 0.0000 & 0.9904 & 0.0096 \\
0.00 & 0.90 & 0.10 & 1.00 & 10.00 & 0.00 & 19.00 & 0.0000 & 0.9022 & 0.0978 \\
0.00 & 0.80 & 0.20 & 1.00 & 1.00 & 0.00 & 13.50 & 0.0000 & 0.8029 & 0.1971 \\
0.00 & 0.70 & 0.30 & 1.00 & 1.00 & 0.00 & 5.60 & 0.0000 & 0.6992 & 0.3008 \\
0.00 & 0.60 & 0.40 & 1.00 & 1.00 & 0.00 & 3.30 & 0.0000 & 0.6006 & 0.3994 \\
0.00 & 0.50 & 0.50 & 1.00 & 1.00 & 0.00 & 2.00 & 0.0000 & 0.5030 & 0.4970 \\
0.00 & 0.40 & 0.60 & 1.00 & 1.00 & 0.00 & 1.30 & 0.0000 & 0.4031 & 0.5969 \\
0.00 & 0.30 & 0.70 & 1.00 & 1.00 & 0.00 & 0.80 & 0.0000 & 0.2953 & 0.7047 \\
0.00 & 0.20 & 0.80 & 1.00 & 1.00 & 0.00 & 0.46 & 0.0000 & 0.1959 & 0.8041 \\
0.00 & 0.10 & 0.90 & 1.00 & 1.00 & 0.00 & 0.21 & 0.0000 & 0.0997 & 0.9003 \\
0.00 & 0.00 & 1.00 & 1.00 & 1.00 & 0.00 & 0.00 & 0.0000 & 0.0000 & 1.0000 \\
\hline
0.10 & 0.01 & 0.89 & 1.00 & 1.00 & 29.50 & 0.03 & 0.0996 & 0.0144 & 0.8860 \\
0.20 & 0.01 & 0.79 & 10.00 & 1.00 & 42.00 & 0.02 & 0.2003 & 0.0088 & 0.7910 \\
0.30 & 0.01 & 0.69 & 10.00 & 1.00 & 75.00 & 0.02 & 0.3007 & 0.0075 & 0.6918 \\
0.40 & 0.01 & 0.59 & 20.00 & 1.00 & 74.00 & 0.03 & 0.3976 & 0.0095 & 0.5929 \\
0.50 & 0.01 & 0.49 & 20.00 & 1.00 & 100.10 & 0.03 & 0.5002 & 0.0082 & 0.4916 \\
0.60 & 0.01 & 0.39 & 20.00 & 1.00 & 155.00 & 0.03 & 0.6034 & 0.0062 & 0.3903 \\
0.70 & 0.01 & 0.29 & 20.00 & 1.00 & 199.00 & 0.04 & 0.7050 & 0.0062 & 0.2889 \\
0.80 & 0.01 & 0.19 & 20.00 & 1.00 & 262.00 & 0.05 & 0.8009 & 0.0053 & 0.1939 \\
0.90 & 0.01 & 0.09 & 30.00 & 1.00 & 300.10 & 0.11 & 0.9035 & 0.0053 & 0.0912 \\
\hline
0.10 & 0.89 & 0.01 & 1.00 & 60.00 & 0.25 & 200.00 & 0.0966 & 0.8947 & 0.0087 \\
0.20 & 0.79 & 0.01 & 1.00 & 60.00 & 0.60 & 200.00 & 0.2044 & 0.7881 & 0.0076 \\
0.30 & 0.69 & 0.01 & 1.00 & 60.00 & 1.00 & 100.00 & 0.3001 & 0.6864 & 0.0134 \\
0.40 & 0.59 & 0.01 & 1.00 & 60.00 & 1.65 & 100.00 & 0.4011 & 0.5875 & 0.0115 \\
0.50 & 0.49 & 0.01 & 1.00 & 60.00 & 2.60 & 100.00 & 0.4993 & 0.4913 & 0.0094 \\
0.60 & 0.39 & 0.01 & 1.00 & 60.00 & 4.40 & 100.00 & 0.6010 & 0.3914 & 0.0076 \\
0.70 & 0.29 & 0.01 & 1.00 & 60.00 & 8.30 & 100.00 & 0.6998 & 0.2945 & 0.0057 \\
0.80 & 0.19 & 0.01 & 10.00 & 60.00 & 10.00 & 70.00 & 0.8023 & 0.1923 & 0.0054 \\
0.90 & 0.09 & 0.01 & 10.00 & 60.00 & 24.00 & 30.00 & 0.9028 & 0.0915 & 0.0057 \\
0.99 & 0.005 & 0.005 & 100.00 & 60.00 & 310.00 & 1.90 & 0.9893 & 0.0054 & 0.0053 \\
\hline
0.20 & 0.600 & 0.200 & 1.00 & 1.00 & 0.57 & 7.70 & 0.1974 & 0.5984 & 0.2042 \\
0.20 & 0.400 & 0.400 & 1.00 & 1.00 & 0.80 & 2.00 & 0.1978 & 0.4023 & 0.3999 \\
0.20 & 0.200 & 0.600 & 1.00 & 1.00 & 1.55 & 0.64 & 0.1967 & 0.2033 & 0.6000 \\
0.40 & 0.400 & 0.200 & 1.00 & 1.00 & 1.80 & 4.50 & 0.4007 & 0.3982 & 0.2010 \\
0.40 & 0.200 & 0.400 & 1.00 & 1.00 & 3.55 & 0.95 & 0.3973 & 0.2031 & 0.3997 \\
0.60 & 0.200 & 0.200 & 2.00 & 1.00 & 5.50 & 2.00 & 0.5986 & 0.2009 & 0.2005 \\
\hline
\end{tabular}
\end{table*}


\subsection{Design and Optimality of the 3E-CUSUM Algorithm}

We now show that we can set the parameters of the 3E-CUSUM algorithm to achieve any desired ARLFA and POR constraints. We also establish its asymptotic optimality. Recall that we defined in Algorithm \ref{alg:3e-cusum} $\sigma^Z(A) = \min \{ n \geq 1: D_n \notin (0, A) \}$, i.e., the first time $D_n$ falls below 0 or crosses $A$ from below. We let $\sigma^Z (\infty) = \sigma^Z$, i.e., the first time the statistic falls below 0. Note that in the 3E-CUSUM algorithm, the statistic $D_n$ is updated using log-likelihood ratios for experiment $Z$ while it is above $0$. When $D_{\sigma^Z} < 0$, we use $D_{\sigma^Z}$ to denote its undershoot. Also, we recall that $\tau_{\textup{\fontsize{6 pt}{10 pt}\selectfont C}}^Y (A)$ is the CUSUM stopping rule with observations from experiment $Y$. Figure \ref{fig:por-ts} illustrates these notations.

\begin{figure}[!t]
\centering
\includegraphics[scale=0.4]{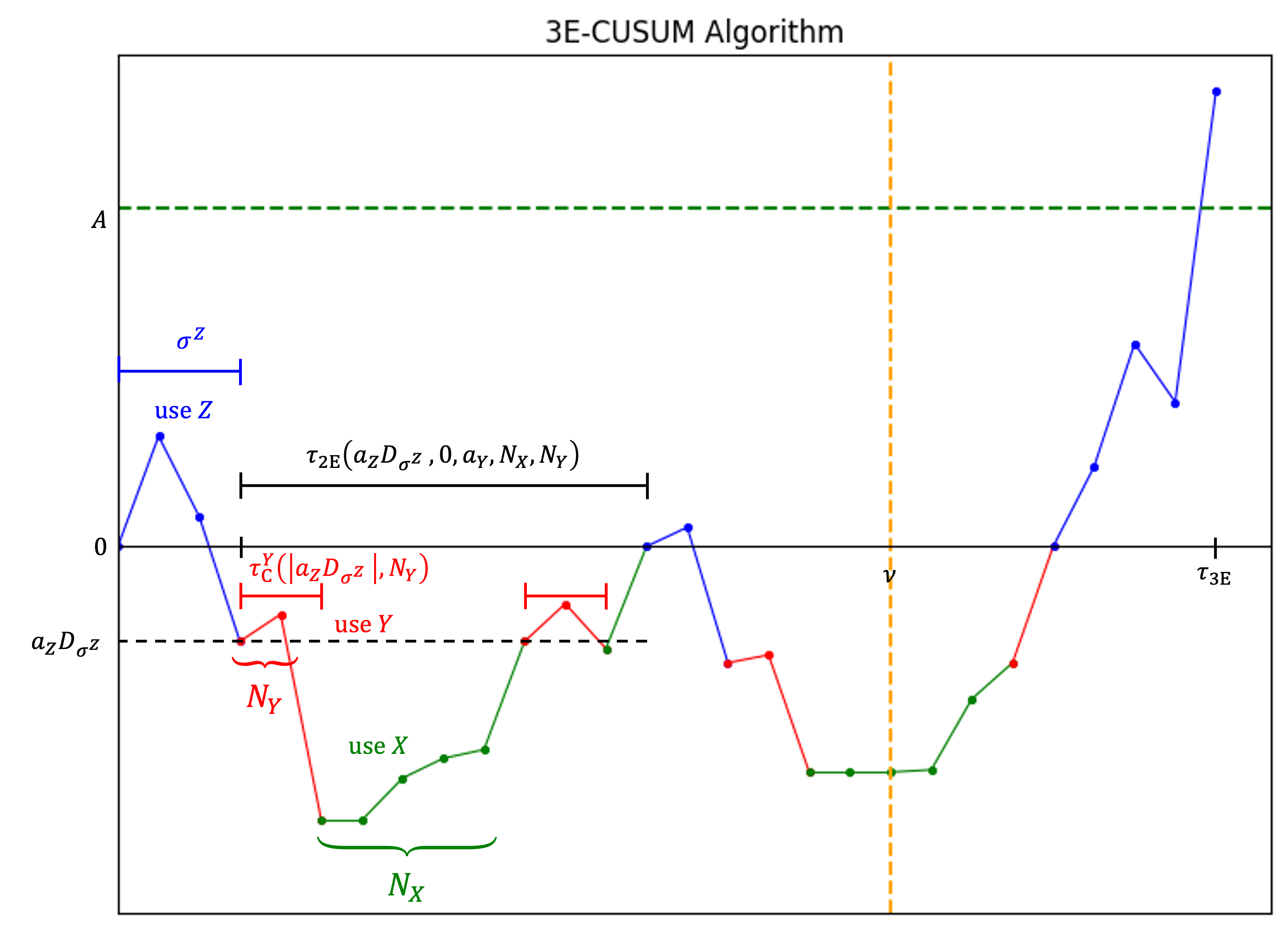}
\caption{The 3E-CUSUM Algorithm. Experiment $Z$ is performed first. When the statistic $D_n$ falls below 0, the undershoot is scaled by a factor of $a_Z$, and the truncated 2E-CUSUM algorithm for experiments $Y$ and $X$ is implemented. When the statistic crosses 0 from below, experiment $Z$ is performed again. Using pre-change distributions of $\{ X_n \}$, $\{ Y_n \}$, and $\{ Z_n \}$, $\sigma^Z$ is the length of time experiment $Z$ is performed, $\tau_{\textup{\fontsize{6 pt}{10 pt}\selectfont C}}^Y (\vert a_Z D_{\sigma^Z} \vert, N_Y)$ is the length of time experiment $Y$ is performed, and $\tau_{\textup{\fontsize{6 pt}{10 pt}\selectfont 2E}} (a_Z D_{\sigma^Z}, 0, N_X, N_Y)$ is the total length of time experiments $X$ and $Y$ are performed before change occurs at time $\nu$. The algorithm stops and declares change at $\tau_{\textup{\fontsize{6 pt}{10 pt}\selectfont 3E}}$ when the statistic crosses threshold $A$.}
\label{fig:por-ts}
\end{figure}

\medskip

\begin{theorem}
\label{thm:3e}

For the \textup{3E-CUSUM} Algorithm, let 
$$ 0 < D(f_1^X \mathrel{\Vert} f_0^X) \leq D(f_1^Y \mathrel{\Vert} f_0^Y) \leq D(f_1^Z \mathrel{\Vert} f_0^Z) < \infty. $$

\begin{enumerate}
\item For any fixed $a_Y > 0$, $a_Z > 0$, $N_X > 0$, and $N_Y > 0$, with $A = \log \gamma$,
$$ \textup{ARLFA} (\Psi_{\textup{3E}}) \geq \gamma. $$

\item For fixed values of $A$, $a_Y > 0$, $a_Z > 0$, $N_X > 0$, and $N_Y > 0$,
\begin{align*}
& \textup{POR}_Z (\tau_{\textup{\fontsize{6 pt}{10 pt}\selectfont 3E}})  = \frac{\mathsf{E}_\infty [\sigma^Z]}{\mathsf{E}_\infty [\sigma^Z] + \mathsf{E}_\infty [\tau_{\textup{\fontsize{6 pt}{10 pt}\selectfont 2E}}(a_Z D_{\sigma^Z}, 0, a_Y, N_X, N_Y)]} \\
& \textup{POR}_Y (\tau_{\textup{\fontsize{6 pt}{10 pt}\selectfont 3E}})  = \frac{\mathsf{E}_\infty [\min
\{ \tau_{\textup{\fontsize{6 pt}{10 pt}\selectfont C}}^Y (\vert a_Z D_{\sigma^Z}\vert), N_Y \}]}{\mathsf{E}_\infty [\sigma^Z] + \mathsf{E}_\infty [\tau_{\textup{\fontsize{6 pt}{10 pt}\selectfont 2E}}(a_Z D_{\sigma^Z}, 0, a_Y, N_X, N_Y)]}.
\end{align*}
The POR expressions are not a function of $A$. Thus, for any $A$, we can always choose values for $a_Y$, $a_Z$, $N_X$, and $N_Y$ to meet any given POR constraint of $\beta_Z$ and $\beta_Y$:
$$ \textup{POR}_Z (\Psi_{\textup{3E}} (A, a_Y, a_Z, N_X, N_Y)) \leq \beta_Z $$
and
$$ \textup{POR}_Y (\Psi_{\textup{3E}} (A, a_Y, a_Z, N_X, N_Y)) \leq \beta_Y. $$

\item For fixed values of $a_Y > 0$, $a_Z > 0$, $N_X > 0$, and $N_Y > 0$, and for each $A$,
$$ \textup{WADD} (\tau_{\textup{\fontsize{6 pt}{10 pt}\selectfont 3E}}) \leq \mathsf{E}_1 [\tau_{\textup{\fontsize{6 pt}{10 pt}\selectfont 3E}}] + N_Y + N_Y N_X. $$
Consequently, 
$$ \textup{WADD} (\Psi_{\textup{3E}}) \leq \textup{WADD} (\tau_{\textup{\fontsize{6 pt}{10 pt}\selectfont C}}^Z) + K_3, $$
where $\tau_{\textup{\fontsize{6 pt}{10 pt}\selectfont C}}^Z$ is the \textup{CUSUM} algorithm for the experiment $Z$ and $K_3$ is a constant that is not a function of threshold $A$. 
\end{enumerate}
\end{theorem}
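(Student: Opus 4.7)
The plan is to follow the three-part structure of the proof of Theorem~\ref{thm:2e}, treating each invocation of the truncated 2E-CUSUM subroutine in exactly the role that a single $X$-excursion played in the two-experiment analysis. This modular view makes Parts~1 and~3 largely routine adaptations, and concentrates the technical work in Part~2.

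\textbf{Part 1.} Under the pre-change model, every observation the statistic $\{D_n\}$ spends inside the 2E-CUSUM subroutine (i.e., below the outer zero) only postpones the eventual hitting of threshold $A$ by the outer $Z$-CUSUM. The same sample-path dominance argument used in the proof of Theorem~\ref{thm:2e}(1) yields
$$ \mathsf{E}_\infty[\tau_{\textup{\fontsize{6 pt}{10 pt}\selectfont 3E}}] \geq \mathsf{E}_\infty[\tau_{\textup{\fontsize{6 pt}{10 pt}\selectfont C}}^Z(A)] \geq e^A = \gamma, $$
upon setting $A = \log\gamma$.

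\textbf{Part 2.} The POR formulas follow from the Renewal Reward Theorem applied to the cycle that closes each time $D_n$ returns to zero from below. Within one cycle the outer $Z$-CUSUM contributes $\sigma^Z$ observations and the inner subroutine contributes $\tau_{\textup{\fontsize{6 pt}{10 pt}\selectfont 2E}}(a_Z D_{\sigma^Z}, 0, a_Y, N_X, N_Y)$ observations, so taking $Z$-samples per cycle as the reward gives the claimed POR$_Z$ formula immediately. For POR$_Y$, a shift of origin by $b = a_Z D_{\sigma^Z}$ identifies the $Y$-samples consumed inside the subroutine with the path of a CUSUM for $Y$ started at zero, reflected at zero via the $X$-excursions, and run until it first crosses $|b|$ (exiting the subroutine) or exhausts $N_Y$ samples; this produces the formula in the theorem. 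Because neither expression depends on $A$, feasibility of any pair $(\beta_Z,\beta_Y)$ in the prescribed range follows from continuity of both POR expressions in $(a_Y, a_Z, N_X, N_Y)$ together with the boundary behavior already recorded in Section~\ref{sec:3e-cusum}: $N_Y\to 0$ collapses the algorithm to CUSUM for $Z$, driving POR$_Z\to 1$, while jointly sending $a_Z$ and $N_Y$ to infinity makes POR$_Z\to 0$; independently tuning $a_Y$ against $N_X$ then redistributes the residual pre-change time between $Y$ and $X$, and an intermediate-value argument finishes the claim.

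\textbf{Part 3.} I adapt the essential-supremum argument from the proof of Theorem~\ref{thm:2e}(3). The worst-case $\nu$ and $I_{\nu-1}$ drive $D_{\nu-1}$ arbitrarily far below zero, so after the change the subroutine must in the worst case exhaust its entire truncation budget before the statistic returns above zero: at most $N_Y$ observations of $Y$, each potentially triggering up to $N_X$ observations of $X$, for a total of at most $N_Y + N_Y N_X$ extra observations. This yields
$$ \textup{WADD}(\tau_{\textup{\fontsize{6 pt}{10 pt}\selectfont 3E}}) \leq \mathsf{E}_1[\tau_{\textup{\fontsize{6 pt}{10 pt}\selectfont 3E}}] + N_Y + N_Y N_X. $$
The bound $\textup{WADD}(\tau_{\textup{\fontsize{6 pt}{10 pt}\selectfont C}}^Z) + K_3$ then follows exactly as in Theorem~\ref{thm:2e}(3): after the change the outer $Z$-statistic has strictly positive drift, so only a bounded, $A$-independent number of subroutine invocations can occur before the threshold is crossed, giving $\mathsf{E}_1[\tau_{\textup{\fontsize{6 pt}{10 pt}\selectfont 3E}}] = \mathsf{E}_1[\tau_{\textup{\fontsize{6 pt}{10 pt}\selectfont C}}^Z] + K$ with $K$ independent of $A$, and one may take $K_3 = K + N_Y + N_Y N_X$.

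The main obstacle I anticipate is the $Y$-accounting in Part~2. Unlike the 2E case, where a single $X$-excursion was analyzed in isolation, here the subroutine alternates between (possibly several) $Y$-excursions and $X$-excursions within one outer cycle, and one must verify carefully that the reflections introduced by interleaved $X$-excursions leave the marginal $Y$-path equidistributed with a standalone, $N_Y$-truncated CUSUM for $Y$ run up to level $|a_Z D_{\sigma^Z}|$. Independence of the streams $\{X_n\}$ and $\{Y_n\}$ makes this bookkeeping clean, but it is the only step in the proof that genuinely goes beyond the two-experiment case, and every other argument essentially reduces to quoting Theorem~\ref{thm:2e} with the subroutine in the place of a single $X$-excursion.
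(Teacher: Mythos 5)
Your proposal is correct and follows essentially the same route as the paper's own proof: sample-path dominance over the $Z$-CUSUM for the ARLFA bound, the Renewal Reward Theorem on cycles ending when $D_n$ returns to zero (with the $Y$-time identified as a truncated $Y$-CUSUM reflected at level $a_Z D_{\sigma^Z}$) for the POR expressions, and the worst-case undershoot argument bounding the post-change excursion by $N_Y + N_Y N_X$ for the delay. Your explicit flagging of the $Y$-accounting subtlety is a point the paper asserts without elaboration, but it is the same underlying claim rather than a different argument.
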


\begin{proof}
\begin{enumerate}
\item The false alarm analysis for the 3E-CUSUM algorithm is similar to that of the 2E-CUSUM algorithm. Specifically, for any value of threshold $A$, we have
$$ \mathsf{E}_\infty [\tau_{\textup{\fontsize{6 pt}{10 pt}\selectfont 3E}}] \geq \mathsf{E}_\infty [\tau_{\textup{\fontsize{6 pt}{10 pt}\selectfont C}}^Z (A)] \geq e^A. $$

\item The expression for $\textup{POR}_Z$ follows again from the Renewal Reward Theorem because under the pre-change distribution, the expected number of observations of experiment $Z$ is $\mathsf{E}_\infty [\sigma^Z]$. Also, the average time spent below zero is given by the time for the truncated 2E-CUSUM algorithm to stop and bring the statistic above $0$: $\mathsf{E}_\infty [\tau_{\textup{\fontsize{6 pt}{10 pt}\selectfont 2E}}(a_Z D_{\sigma^Z}, 0, a_Y, N_X, N_Y)]$. This gives
\begin{align*}
\textup{POR}_Z (\tau_{\textup{\fontsize{6 pt}{10 pt}\selectfont 3E}})  = \frac{\mathsf{E}_\infty [\sigma^Z]}{\mathsf{E}_\infty [\sigma^Z] + \mathsf{E}_\infty [\tau_{\textup{\fontsize{6 pt}{10 pt}\selectfont 2E}}(a_Z D_{\sigma^Z}, 0, a_Y, N_X, N_Y)]}.
\end{align*}

When $N_Y = 0$, i.e., experiment $Y$ is never performed (consequently, experiment $X$ is never performed as well), $\textup{POR}_Z = 1$ and experiment $Z$ is utilized all the time. To decrease the fraction of time experiment $Z$ is chosen, we increase $N_Y$ and/or $N_X$ (and also $a_Y$ and $a_Z$), the number of times experiments $Y$ or $X$, respectively, can be done. Specifically, note that as $a_Z \to \infty$ and $N_Y \to \infty$, we have $\mathsf{E}_\infty [\tau_{\textup{\fontsize{6 pt}{10 pt}\selectfont 2E}}(a_Z D_{\sigma^Z}, 0, a_Y, N_X, N_Y)] \to \infty$ and, thus, $\textup{POR}_Z \to 0$.

\medskip

The expression for $\textup{POR}_Y$ also follows from the Renewal Reward Theorem because in the renewal cycles starting with statistic $D_n=0$ to the point where $D_n$ is again zero after being negative, the time spent using experiment $Y$ is equal to the time for a truncated (at $N_Y$) CUSUM statistic for experiment $Y$ to reach $0$ after starting and getting reflected at the negative level $a_Z D_{\sigma^Z}$. This time is statistically equal to $\min \{ \tau_{\textup{\fontsize{6 pt}{10 pt}\selectfont C}}^Y (\vert a_Z D_{\sigma^Z} \vert), N_Y \}$. Thus,
\begin{align*}
 \textup{POR}_Y (\tau_{\textup{\fontsize{6 pt}{10 pt}\selectfont 3E}}) = \frac{\mathsf{E}_\infty [\min \{ \tau_{\textup{\fontsize{6 pt}{10 pt}\selectfont C}}^Y (\vert a_Z D_{\sigma^Z} \vert), N_Y \}]}{\mathsf{E}_\infty [\sigma^Z] + \mathsf{E}_\infty [\tau_{\textup{\fontsize{6 pt}{10 pt}\selectfont 2E}}(a_Z D_{\sigma^Z}, 0, a_Y, N_X, N_Y)]}.
\end{align*}
To control $\textup{POR}_Y$, first note that as $N_X \to 0$,
\begin{align*}
 \textup{POR}_Y (\tau_{\textup{\fontsize{6 pt}{10 pt}\selectfont 3E}}) &= \frac{\mathsf{E}_\infty [\min \{ \tau_{\textup{\fontsize{6 pt}{10 pt}\selectfont C}}^Y (\vert a_Z D_{\sigma^Z} \vert), N_Y \}]}{\mathsf{E}_\infty [\sigma^Z] + \mathsf{E}_\infty [\tau_{\textup{\fontsize{6 pt}{10 pt}\selectfont 2E}}(a_Z D_{\sigma^Z}, 0, a_Y, N_X, N_Y)]} \\
& \to \frac{\mathsf{E}_\infty [\min \{ \tau_{\textup{\fontsize{6 pt}{10 pt}\selectfont C}}^Y (\vert a_Z D_{\sigma^Z} \vert), N_Y \}]}{\mathsf{E}_\infty [\sigma^Z] + \mathsf{E}_\infty [\min \{ \tau_{\textup{\fontsize{6 pt}{10 pt}\selectfont C}}^Y (\vert a_Z D_{\sigma^Z} \vert), N_Y \}]}.
\end{align*}
Thus, one way to increase $\textup{POR}_Y$ is to bring $N_X$ to $0$ and increase $a_Z$ and $N_Y$. On the other hand, to decrease $\textup{POR}_Y (\tau_{\textup{\fontsize{6 pt}{10 pt}\selectfont 3E}})$, we increase $N_X$ and $a_Y$. 

\medskip

Finally, we have
$$ \textup{POR}_X (\tau_{\textup{\fontsize{6 pt}{10 pt}\selectfont 3E}}) = 1 - \textup{POR}_Y (\tau_{\textup{\fontsize{6 pt}{10 pt}\selectfont 2E}}) - \textup{POR}_Z (\tau_{\textup{\fontsize{6 pt}{10 pt}\selectfont 2E}}). $$
Note that the expected number of observations of experiment $X$ in a cycle can be written as
\begin{align*}
 \mathsf{E}_\infty [\tau_{\textup{\fontsize{6 pt}{10 pt}\selectfont 2E}}(a_Z D_{\sigma^Z}, 0, a_Y, N_X, N_Y)]  - \mathsf{E}_\infty [\min \{ \tau_{\textup{\fontsize{6 pt}{10 pt}\selectfont C}}^Y (\vert a_Z D_{\sigma^Z} \vert), N_Y \}].
\end{align*}

\item For the delay analysis of the 3E-CUSUM algorithm, suppose the change occurs at time $\nu = 2$. Then the worst possible realization from experiment $Z$ at time $1$ would create a potentially infinite amount of undershoot $D_{\sigma^Z}$. Thus, the only way for the statistic $D_n$ to come back to $0$ would be to exhaust all the $N_Y$ samples from the experiment $Y$. Starting time 2, each time an observation from experiment $Y$ is chosen, there is a potential $N_X$ number of observations one can collect from experiment $X$. Thus, one can bound the number of samples before $D_n$ comes above zero by 
$$ N_Y + N_Y N_X. $$
Once the statistic $D_n$ comes above $0$, it is reset to $0$, and the delay from that point onward is 
$$ \mathsf{E}_1 [\tau_{\textup{\fontsize{6 pt}{10 pt}\selectfont 3E}}]. $$
A similar argument for other change points gives us 
$$ \textup{WADD} (\tau_{\textup{\fontsize{6 pt}{10 pt}\selectfont 3E}}) \leq \mathsf{E}_1 [\tau_{\textup{\fontsize{6 pt}{10 pt}\selectfont 3E}}] + N_Y + N_Y N_X. $$
Using an argument similar to that used in the case of the 2E-CUSUM algorithm, we have
\begin{align*}
\textup{WADD} (\tau_{\textup{\fontsize{6 pt}{10 pt}\selectfont 3E}})
& \leq \mathsf{E}_1 [\tau_{\textup{\fontsize{6 pt}{10 pt}\selectfont 3E}}] + N_Y + N_Y N_X \\
& = (\mathsf{E}_1 [\tau_{\textup{\fontsize{6 pt}{10 pt}\selectfont C}}^Z] + K) + N_Y + N_Y N_X \\
& = \textup{WADD} (\tau_{\textup{\fontsize{6 pt}{10 pt}\selectfont C}}^Z) + K + N_Y + N_Y N_X.
\end{align*}
Here $K$ is a constant that does not depend on the threshold $A$. 
\end{enumerate}
\end{proof}

The above theorem implies strong optimality properties of the 3E-CUSUM algorithm. 

\begin{corollary}
Under the conditions of the theorem, for a given $\gamma$, set $A = \log \gamma$. Then for any choice of $a_Y$, $a_Z$, $N_X$, and $N_Y$,
$$ \textup{ARLFA} (\Psi_{\textup{3E}}) \geq \gamma. $$
For a given $\beta_Y$ and $\beta_Z$, we can find parameters $a_Y$, $a_Z$, $N_X$, and $N_Y$ such that for all $A$ (specifically, $A = \log \gamma$),
$$ \textup{POR}_Z (\Psi_{\textup{3E}}) \leq \beta_Z, $$
and
$$ \textup{POR}_Y (\Psi_{\textup{3E}}) \leq \beta_Y. $$
Moreover, for such a choice of these parameters, 
\begin{align*}
\textup{WADD} (\tau_{\textup{\fontsize{6 pt}{10 pt}\selectfont 3E}}) \sim \textup{WADD} (\tau_{\textup{\fontsize{6 pt}{10 pt}\selectfont C}}^Z) \sim \frac{\log \gamma}{D(f_1^Z \mathrel{\Vert} f_0^Z)} \;\; \text{as} \;\; \gamma \to \infty. 
\end{align*}
\end{corollary}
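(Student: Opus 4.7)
The ARLFA claim is the direct content of Theorem~\ref{thm:3e}, part~1, so the plan is simply to quote that result with the choice $A=\log\gamma$, noting that the argument there is insensitive to the specific values of $a_Y,a_Z,N_X,N_Y$ because the dominating bound $e^A$ comes from restricting attention to the sub-event on which the statistic stays nonnegative throughout.

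For the POR feasibility, I would invoke the explicit Renewal-Reward expressions for $\textup{POR}_Z$ and $\textup{POR}_Y$ established in Theorem~\ref{thm:3e}, part~2, and emphasize that these expressions are independent of $A$. The strategy is to exhibit the design knobs: first pick $N_X$ and $a_Y$ large enough to make $\mathsf{E}_\infty[\tau_{\textup{\fontsize{6 pt}{10 pt}\selectfont 2E}}(a_Z D_{\sigma^Z},0,a_Y,N_X,N_Y)]$ dominate $\mathsf{E}_\infty[\sigma^Z]$ by a factor that enforces $\textup{POR}_Z\le\beta_Z$, then tune $N_Y$ and $a_Z$ to control the numerator of $\textup{POR}_Y$ so that $\textup{POR}_Y\le\beta_Y$. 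The limiting behaviors recorded in Theorem~\ref{thm:3e}, part~2 (in particular, $\textup{POR}_Z\to 0$ as $a_Z,N_Y\to\infty$ and the sensitivities of $\textup{POR}_Y$ to $N_X$ and $a_Y$) together with continuity of the Renewal-Reward ratios in the parameters give an Intermediate Value argument; the existence of such parameters is precisely the content of the constructive numerical table already reported in Table~\ref{tab:por-ts-sim}.

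For the asymptotic WADD, the upper bound is essentially free from Theorem~\ref{thm:3e}, part~3: we have $\textup{WADD}(\tau_{\textup{\fontsize{6 pt}{10 pt}\selectfont 3E}})\le \textup{WADD}(\tau_{\textup{\fontsize{6 pt}{10 pt}\selectfont C}}^Z)+K_3$ with $K_3$ not depending on $A$, and the classical CUSUM asymptotics (Lorden, Moustakides, Lai) give $\textup{WADD}(\tau_{\textup{\fontsize{6 pt}{10 pt}\selectfont C}}^Z)\sim \log\gamma/D(f_1^Z\mathrel{\Vert} f_0^Z)$ when $A=\log\gamma$. The matching lower bound is the standard Lai-type universal lower bound for QCD with controlled sensing: any policy with $\textup{ARLFA}\ge\gamma$ incurs delay at least $\log\gamma$ divided by the long-run average post-change Kullback-Leibler information rate, and under the 3E-CUSUM the post-change dynamics (by the same excursion argument used in Theorem~\ref{thm:3e}, part~3) eventually stay above zero, so the limiting information rate equals $D(f_1^Z\mathrel{\Vert} f_0^Z)$; combining the two sandwich bounds yields the stated equivalence.

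The main obstacle I anticipate is not the WADD upper bound, which is a plug-in, but rather making the POR feasibility argument fully rigorous. The Renewal-Reward formulas give the ratios, and the limits in Theorem~\ref{thm:3e} pin down the extremes, but to claim we can hit \emph{any} pair $(\beta_Y,\beta_Z)$ in the admissible region one needs joint continuity of the two ratios in $(a_Y,a_Z,N_X,N_Y)$ together with a two-dimensional Intermediate Value / surjectivity argument over a four-parameter family. Because $N_X$ and $N_Y$ are permitted to take nonnegative real values (via the randomization convention introduced with Algorithm~\ref{alg:2e-cusum}), continuity can be guaranteed; the surjectivity onto the target POR simplex can then be argued by first fixing $\textup{POR}_Z$ via $(a_Z,N_Y)$ and then adjusting $(a_Y,N_X)$ to split the remaining mass between $Y$ and $X$ as desired.
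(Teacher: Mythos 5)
Your proposal is correct and follows essentially the same route as the paper, whose entire proof is the single sentence that the corollary ``follows from Theorem~\ref{thm:3e} and the property of the CUSUM algorithm''; you simply fill in the details (citing each part of the theorem, the classical CUSUM asymptotics, and a Lai-type lower bound to close the sandwich). Your observation that the POR surjectivity claim really needs a continuity-plus-intermediate-value argument over the four parameters is a fair point about a gap the paper leaves implicit in Theorem~\ref{thm:3e}, part~2, but it does not change the proof strategy.
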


\begin{proof}
The proof follows from Theorem \ref{thm:3e} and the property of the CUSUM algorithm. 
\end{proof}


\subsection{Truncated \textup{3E-CUSUM} Algorithm}
\label{sec:trunc-3e}

In this section, we define a truncated version of the 3E-CUSUM algorithm in a manner similar to how we defined the truncated 2E-CUSUM in Section \ref{sec:trunc-2e}. As discussed there, we need this truncated modification to define QCD with four experiments. 

Notice that in the 3E-CUSUM algorithm, there is no limit for the observations on experiment $Z$. However, there is a limit of $N_Y$ for observations on experiment $Y$, each time it is used before reverting to experiment $Z$. There is also a limit of $N_X$ for observations on experiment $X$, each time it is used before reverting to experiment $Y$. In the truncated 3E-CUSUM algorithm, we simply impose a limit of $N_Z$ for observations on experiment $Z$. While performing experiment $Z$, if the statistic crosses the detection threshold or $N_Z$ observations are reached, we stop and declare a change. For completeness, we provide the full description of the truncated 3E-CUSUM algorithm below. 

\begin{algorithm}[Truncated 3E-CUSUM: $\Psi_{\textup{3E}} (A, a_Y, a_Z, N_X, N_Y, N_Z)$]
\label{alg:trunc-3e}

Fix $A > 0$, $a_Y > 0$, $a_Z > 0$, $N_X \geq 0$, $N_Y \geq 0$, and $N_Z \geq 0$. The truncated \textup{3E-CUSUM} statistic $\{ D_n \}$ is defined as follows:

\begin{enumerate}
\item We start with $D_0 = 0$ and update $\{ D_n \}$ using the observations $\{ Z_n \}$ from the highest quality experiment using
$$ D_{n + 1}  = D_n + \ell_Z (Z_{n + 1}), $$
where $\ell_Z (z) = \log \frac{f_1^Z (z)}{f_0^Z (z)}$, until either time
$$ \sigma := \sigma^Z(A) = \min \{ n \geq 1: D_n \notin (0, A) \} $$
or $N_Z$ number of observations of experiment $Z$ is used. If $D_{\sigma^Z} > A$ or $N_Z$ has been reached, we stop and declare the change.

\item If $D_{\sigma^Z} < 0$, we start using observations $\{ Y_n \}$ and $\{ X_n \}$ from the lower-quality experiments and execute the truncated \textup{2E-CUSUM} algorithm for $Y$ and $X$ as follows:
    
    \begin{enumerate}
    \item We first scale the undershoot $D_{\sigma^Z}$ by a factor $a_Z$, and use $a_Z D_{\sigma^Z}$ as the new zero level for the truncated \textup{2E-CUSUM} algorithm for $Y$ and $X$.
    
    \item With $a_Z D_{\sigma^Z}$ as the new zero level, the stopping threshold is set at $0$.
    
    \item We update the statistic $\{ D_n \}$ starting at time $\sigma^Z(A)$ using the truncated \textup{2E-CUSUM} policy \\ $\Psi_{\textup{2E}} (a_Z D_{\sigma^Z}, 0, a_Y, N_X, N_Y)$.
    
    \item When the policy $\Psi_{\textup{2E}} (a_Z D_{\sigma^Z}, 0, a_Y, N_X, N_Y)$ terminates, the statistic $D_n$ must cross 0 from below. At this time, we reset the statistic $D_n$ to $0$ and go to Step $1)$.
    \end{enumerate}
\end{enumerate}    
\end{algorithm}


\section{QCD with Multiple-Experiment Design}
\label{sec:multiple-expt}

At this point, we are now finally ready to solve the multiple-experiment QCD problem we formulated in Section~\ref{sec:prob}. The analyses we did for the two- and three-experiment cases will help us generalize our proposed algorithms to the $m$-experiment case.

Observe that the 3E-CUSUM algorithm starts by using experiment $Z$, the highest-quality experiment, while the statistic is above 0. Once it falls below 0, we scale the undershoot and use this scaled value as the starting point of the truncated 2E-CUSUM algorithm for experiments $Y$ and $X$. Once the statistic crosses 0 from below, we use experiment $Z$ again.

In a similar vein, we can propose the 4E-CUSUM algorithm as follows: while the statistic is above 0, the highest-quality experiment is performed. When the statistic falls below 0, the scaled undershoot is used as the starting point of the truncated 3E-CUSUM algorithm. Once the statistic goes above 0, the highest-quality experiment is used again.

Following this recursive description, we propose the $m$E-CUSUM algorithm ($\Psi_{\textup{$m$E}} (A, a_2, a_3, \dots, a_m, N_1, N_2, \dots, N_{m - 1})$). This algorithm starts by using experiment $m$, the highest-quality experiment, while the statistic is above 0. Once it falls below 0, we scale the undershoot and use this value as the starting point of the truncated $(m - 1)$E-CUSUM algorithm for the remaining $m - 1$ experiments. Once the statistic crosses 0 from below, we perform experiment $m$ again.

\begin{algorithm}[$m$E-CUSUM: $\Psi_{\textup{$m$E}} (A, a_2, \dots, a_m, N_1, \dots, N_{m - 1})$]
\label{alg:me-cusum}

Fix $A > 0$, $a_i > 0$ ($i = 2, \dots, m$), and $N_j \geq 0$ ($j = 1, \dots, m - 1$). The \textup{$m$E-CUSUM statistic} $\{ D_n \}$ is defined as follows:

    \begin{enumerate}
    \item We start with $D_0 = 0$ and update $\{ D_n \}$ using the observations $\{ X_n^m \}$ from experiment $m$, the highest quality experiment, using
    $$ D_{n + 1}  = D_n + \ell_m (X_{n + 1}^m), $$
    where $\ell_m (x^m) = \log \frac{f_1^m (x^m)}{f_0^m (x^m)}$, until 
    $$ \sigma := \sigma^m (A) = \min \{ n \geq 1: D_n \notin (0, A) \}. $$
    If $D_{\sigma^m} > A$, we stop and declare the change.
    
    \item If $D_{\sigma^m} < 0$, we start using observations $\{ X_n^{m - 1} \}$, $\{ X_n^{m - 2} \}$, $\dots$, $\{ X_n^1 \}$ from the lower-quality experiments and execute the truncated \textup{$(m - 1)$E-CUSUM} algorithm for experiment $m - 1$, experiment $m - 2$, up to experiment 1 as follows:
    
    \begin{enumerate}
    \item We first scale the undershoot $D_{\sigma^m}$ by a factor of $a_m$, and use $a_m D_{\sigma^m}$ as the new zero level for the \textup{$(m - 1)$E-CUSUM} algorithm for experiment $m - 1$, experiment $m - 2$, up to experiment 1.
    
    \item With $a_m D_{\sigma^m}$ as the new zero level, the stopping threshold is set at $0$.
    
    \item We update the statistic $\{ D_n \}$ starting at time $\sigma^m (A)$ using the truncated $(m - 1)$\textup{E-CUSUM} policy \\ $\Psi_{\textup{$(m - 1)$E}} (a_m D_{\sigma^m}, 0, a_2, \dots, a_m, N_1, \dots, N_{m - 1})$.
    
    \item When the policy $\Psi_{\textup{$(m - 1)$E}} (a_m D_{\sigma^m}, 0, a_2, \dots, a_m, N_1, \dots, N_{m - 1})$ terminates, the statistic $D_n$ must cross 0 from below. At this time, we reset the statistic $D_n$ to $0$ and go to Step $1)$.
    \end{enumerate}
\end{enumerate}    
\end{algorithm}

\medskip

In the theorem below, we state the main result on the optimality of the $m$E-CUSUM algorithm. We skip the proof as the proof is similar to that of the 3E-CUSUM algorithm, but is notationally complex. 

\begin{theorem}
\label{thm:me}

For the \textup{$m$E-CUSUM} Algorithm, let $0 < D(f_1^j \mathrel{\Vert} f_0^j) < D(f_0^i \mathrel{\Vert} f_1^i) < \infty$ for $i > j$ where $i, j = 1, 2, \dots, m$.

\begin{enumerate}
\item For any fixed $a_i > 0$ ($i = 2, 3, \dots, m$) and $N_j > 0$ ($j = 1, 2, \dots, m - 1$), with $A = \log \gamma$, $\textup{ARLFA} (\Psi_{\textup{$m$E}}) \geq \gamma$.

\item For any $A$, we can choose the parameters such that
$ \textup{POR}_i (\Psi_{\textup{$m$E}} (A, a_2, a_3, \dots, a_m, N_1, N_2, \dots, N_{m - 1})) \leq \beta_i $
for $i = 2, 3, \dots, m$.

\item For fixed values of $a_i > 0$ ($i = 2, 3, \dots, m$) and $N_j > 0$ ($j = 1, 2, \dots, m - 1$), and for each $A$,
$$ \textup{WADD} (\tau_{\textup{\fontsize{6 pt}{10 pt}\selectfont $m$E}}) \leq \mathsf{E}_1 [\tau_{\textup{\fontsize{6 pt}{10 pt}\selectfont $m$E}}] + \sum_{r = 1}^{m - 1} \prod_{k = m - r}^{m - 1} N_k. $$
Consequently,
$$ \textup{WADD} (\Psi_{\textup{$m$E}}) \leq \textup{WADD} (\tau_{\textup{\fontsize{6 pt}{10 pt}\selectfont C}}^m) + K_m, $$
where $\tau_{\textup{\fontsize{6 pt}{10 pt}\selectfont C}}^m$ is the \textup{CUSUM} algorithm for the experiment $m$ and $K_m$ is a constant that is not a function of threshold $A$. Hence, the \textup{$m$E-CUSUM} Algorithm is asymptotically optimal.
\end{enumerate}

\end{theorem}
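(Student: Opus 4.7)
The plan is to carry out an induction on $m$, with base case $m=2$ supplied by Theorem~\ref{thm:2e} and inductive step mirroring the step from 2E-CUSUM to 3E-CUSUM worked out in Theorem~\ref{thm:3e}. The recursive definition of $m$E-CUSUM in terms of the truncated $(m-1)$E-CUSUM makes this natural: a single ``cycle'' of the algorithm consists of a sojourn above zero driven by experiment $m$, followed (whenever $D_{\sigma^m}<0$) by a sub-cycle which is precisely a truncated $(m-1)$E-CUSUM run starting at the scaled undershoot $a_m D_{\sigma^m}$, terminating the first time the statistic returns to $0$. Every property we need for the $m$E statistic can then be read off from the behavior of its $m$-experiment top layer plus the inductively established behavior of the truncated $(m-1)$E sub-algorithm.

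For part (1), I would copy the argument from Theorem~\ref{thm:3e}(1) verbatim: while $D_n>0$, the statistic evolves exactly as the CUSUM for experiment $m$, and each excursion below $0$ only adds time before the threshold $A$ can be crossed. Hence $\mathsf{E}_\infty[\tau_{\textup{\fontsize{6 pt}{10 pt}\selectfont $m$E}}] \ge \mathsf{E}_\infty[\tau_{\textup{\fontsize{6 pt}{10 pt}\selectfont C}}^m(A)] \ge e^A$, and setting $A=\log\gamma$ yields the bound.

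For part (2), I would apply the Renewal Reward Theorem with renewals occurring at the instants $D_n=0$ after a negative excursion. Under $\mathsf{P}_\infty$ the expected length of one cycle is $\mathsf{E}_\infty[\sigma^m] + \mathsf{E}_\infty[\tau_{\textup{\fontsize{6 pt}{10 pt}\selectfont $(m-1)$E}}(a_m D_{\sigma^m},0,a_2,\dots,a_{m-1},N_1,\dots,N_{m-2})]$, and the expected time spent on experiment~$m$ per cycle is $\mathsf{E}_\infty[\sigma^m]$, yielding a closed-form $\text{POR}_m$ independent of $A$. For $i<m$, the expected time spent on experiment~$i$ per cycle equals the expected time the truncated $(m-1)$E-CUSUM sub-run uses experiment~$i$, and by the inductive hypothesis this quantity is controllable by the sub-parameters $a_2,\dots,a_{m-1}$ and $N_1,\dots,N_{m-2}$. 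Feasibility of any target $(\beta_2,\dots,\beta_m)$ with $\sum_i\beta_i=1$ follows from the same limiting behavior used in Theorem~\ref{thm:3e}: taking $a_m\to\infty$ and $N_{m-1}\to\infty$ drives $\text{POR}_m$ to $0$, taking $N_{m-1}\to 0$ drives $\text{POR}_m$ to $1$, and within the sub-run the inductive achievability takes care of the remaining coordinates.

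For part (3), I would reuse the essential-supremum worst-case argument from Theorem~\ref{thm:3e}(3). Fixing any change-point $\nu$, the observations just before $\nu$ can be chosen to drive the statistic arbitrarily far below $0$, so the only way $D_n$ can rise back above $0$ is to exhaust the truncation budget. Recursively, each of the $N_{m-1}$ visits to experiment $m-1$ after the change can spawn up to $N_{m-2}$ visits to experiment $m-2$, and so on down to experiment $1$, giving the recursion $R_m = N_{m-1}(1+R_{m-1})$ with $R_1=0$, whose closed form is $\sum_{r=1}^{m-1}\prod_{k=m-r}^{m-1}N_k$. Once $D_n$ first returns to $0$ after $\nu$, only finitely many bounded further sub-cycles below $0$ can occur post-change (by the positivity of $D(f_1^m\|f_0^m)$ and the standard argument in \cite{banerjee2013}), so the remainder of the delay is $\mathsf{E}_1[\tau_{\textup{\fontsize{6 pt}{10 pt}\selectfont C}}^m]+K$ for a constant $K$ independent of $A$. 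Combining with $\mathsf{E}_1[\tau_{\textup{\fontsize{6 pt}{10 pt}\selectfont C}}^m]\sim \log\gamma / D(f_1^m\|f_0^m)$ yields the asymptotic optimality claim.

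The main obstacle is bookkeeping rather than new ideas: to make the inductive step airtight I need to verify that the inductive POR expressions for the truncated $(m-1)$E-CUSUM sub-algorithm, which was analyzed as if starting from $0$ with threshold $A$, still hold when it is started at $a_m D_{\sigma^m}$ with stopping level $0$, since by a translation of the statistic these two problems are probabilistically identical and the ``randomized $N_{m-1}$'' rule used on the final $Y$-layer observation in Algorithm~\ref{alg:truncated_2e-cusum} is what keeps every POR coordinate continuous in the parameters. The same translation invariance is also what lets the essential-supremum argument in part~(3) be reduced cleanly to the recursion $R_m=N_{m-1}(1+R_{m-1})$.
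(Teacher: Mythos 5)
Your proposal is correct and follows exactly the route the paper intends: the paper omits the proof of Theorem~\ref{thm:me}, stating only that it is ``similar to that of the 3E-CUSUM algorithm,'' and your induction on $m$ (CUSUM lower bound for the top layer, Renewal Reward Theorem per cycle, and the worst-case truncation-budget recursion $R_m = N_{m-1}(1+R_{m-1})$, whose closed form matches the stated sum $\sum_{r=1}^{m-1}\prod_{k=m-r}^{m-1}N_k$) is precisely that argument carried out in full. Your closing remark on translation invariance of the sub-run started at $a_m D_{\sigma^m}$ with stopping level $0$ is the right bookkeeping point to make the inductive step airtight.
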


\clearpage

\section{QCD with Experiment Design and Data Efficiency}
\label{sec:data-efficient}

In this section, we address the issue of data efficiency in the two-experiment QCD problem before generalizing to the case of $m$ experiments. In data efficiency, the decision-maker has the additional option not to perform any experiment.

\subsection{Data-Efficient Two-Experiment Design}
\label{sec:de-two-expt}

In the data-efficient two-experiment QCD problem, the decision-maker has two experiments it can perform: the higher quality experiment $Y$ and the lower quality experiment $X$. They also have the option not to perform any experiment. The problem formulation is similar to the original two-experiment QCD problem: 
\begin{align}
\begin{split}
\min_{\Psi} \quad & \textup{WADD} (\Psi), \\
\text{s.t. } \quad & \textup{ARLFA} (\Psi) \geq \gamma, \\
& \textup{POR}_Y (\Psi) \leq \beta_Y, \\
& \textup{POR}_X (\Psi) \leq \beta_X,
\end{split}
\end{align}
where $\gamma \geq 0$, $0 < \beta_Y < 1$, and $0 \leq \beta_X < 1$ are given constants. If $\beta_Y + \beta_X=1$, then this problem reduces to the problem in equation~\eqref{eq:2e-qcd}. Thus, in this problem, we assume that $\beta_Y + \beta_X < 1$. The gap $\beta_0 = 1 - \beta_Y - \beta_X$ is the constraint on the fraction of time neither experiment $Y$ nor $X$ is performed. 

Our solution to this problem is similar to the three-experiment design, except we replace the experiment with the lowest information quality with the case where no experiment is performed. Instead of utilizing a truncated test at the level for a third experiment, a truncated test is done at this level with no experiment, and the statistic is increased deterministically using a design parameter $\mu$. 
We propose the Data-Efficient 2E-CUSUM (DE2E-CUSUM) algorithm.

\begin{algorithm}[DE2E-CUSUM: $\Psi_{\textup{DE2E}} (A, a_X, a_Y, N_0, N_X, \mu)$]
\label{alg:de2e-cusum}

Fix $A > 0$, $a_X > 0$, $a_Y > 0$, $N_0 \geq 0$, $N_X \geq 0$, and $\mu > 0$. The \textup{DE2E-CUSUM} statistic $\{ D_n \}$ is defined as follows:

\begin{enumerate}
\item We start with $D_0 = 0$ and update $\{ D_n \}$ using the observations $\{ Y_n \}$ from the highest quality experiment using
$$ D_{n + 1}  = D_n + \ell_Y (Y_{n + 1}), $$
where $\ell_Y (y) = \log \frac{f_1^Y (y)}{f_0^Y (y)}$, until 
$$ \sigma := \sigma^Y (A) = \min \{ n \geq 1: D_n \notin (0, A) \}. $$
If $D_{\sigma^Y} > A$, we stop and declare the change.

\item If $D_{\sigma^Y} < 0$, we start using observations $\{ X_n \}$ from the lower-quality experiment and execute a modified truncated 2E-CUSUM algorithm for experiment $X$ and experiment 0 (where no experiment is performed) as follows:

    \begin{enumerate}
    \item We first scale the undershoot $U_Y:= D_{\sigma^Y}$ by a factor $a_Y$, and use $a_Y U_Y$ as the new zero level for the modified truncated \textup{2E-CUSUM} algorithm for experiment $X$ and experiment 0.
    
    \item With $a_Y U_Y$ as the new zero level, the stopping threshold is set at $0$.
    
    \item We update the statistic $\{ D_n \}$ starting at time $\sigma$ using
    $$ D_{n + 1} = D_n + \ell_X (X_{n + 1}) $$
    where $\ell_X (X_{n + 1}) = \log \frac{f_1^X (x)}{f_0^X (x)}$ until either the statistic falls below $a_Y U_Y$ at time
    $$ \sigma^X (a_Y U_Y) = \min \{ n \geq \sigma^Y + 1: D_n \not \in (a_Y U_Y, 0) \} $$
    or $N_X$ number of observations of $X$ are used. If $D_{\sigma^X (a_Y U_Y)} > 0$, we reset the statistic $D_n$ to $0$ and go to Step $1)$.
    
    \item If $D_{\sigma^X (a_Y U_Y)} < a_Y U_Y$, we do not perform any experiment but still scale the undershoot $U_X = D_{\sigma^X (a_Y U_Y)}$ by a factor of $a_X$ and use $a_X U_X$ as the starting level for deterministically updating $D_n$ as
    $$ D_{n + 1} = D_n + \mu. $$
    This continues until either $D_n$ goes above $a_Y U_Y$ or $N_0$ number of deterministic updates are used. At this time, we reset $D_n$ to $a_Y U_Y$ and continue updating $D_n$ as in step $\textup{c)}$.
    \end{enumerate}

\item Stopping Rule: Stop at the stopping time
$$ \tau_{\textup{\fontsize{6 pt}{10 pt}\selectfont DE2E}} = \inf \{ n \geq 1: D_n > A \}. $$
\end{enumerate}    
\end{algorithm}

It is clear from the description of the DE2E-CUSUM algorithm that it is a special/degenerate case of the 3E-CUSUM Algorithm discussed in Section~\ref{sec:3e-cusum}. To clarify this point, suppose a decision-maker is executing the 3E-CUSUM algorithm and has to choose between three experiments $Z$, $Y$, and $X$, with $Z$ being of the highest quality. 
The decision-maker uses the experiment $Z$ and $Y$ in the usual way. 
However, when the decision-maker has the opportunity to utilize the lowest quality experiment $X$ (in the 3E-CUSUM algorithm), it does not use it and simply chooses not to perform any experiment at all. The statistic is then updated deterministically by adding to it a design parameter $\mu$ every time step when no experiment is performed. 
Thus, the DE2E-CUSUM algorithm is identical in structure to the 3E-CUSUM algorithm, with the experiment $Z$ and $Y$ in the 3E-CUSUM algorithm replaced by $Y$ and $X$ in the DE2E-CUSUM algorithm. 
A typical evolution of DE2E-CUSUM is illustrated in Figure \ref{fig:de2e-cusum}.

\begin{figure}[!t]
\centering
\includegraphics[width = 0.5\linewidth]{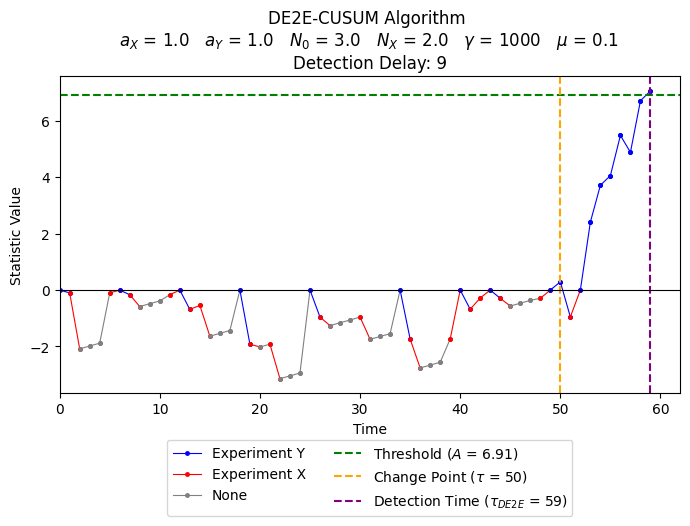}
\caption{An evolution of the statistic $D_n$ of the DE2E-CUSUM algorithm for $f_0^X = f_0^Y = \mathcal{N}(0, 1)$, $f_1^X = \mathcal{N}(0.75, 1)$, $f_1^Y = \mathcal{N}(1, 1)$, $\nu = 50$, and $\gamma = 1000$, with $a_X = a_Y = 1.0$, $N_X = 2.0$, $N_0 = 3.0$, and $\mu = 0.1$. The decision-maker performs experiment $Y$ whenever $D_n \geq 0$, and may perform experiment $X$ or not perform any experiment at all when $D_n < 0$.}
\label{fig:de2e-cusum}
\end{figure}

We now state a theorem on the design and optimality of the DE2E-CUSUM algorithm, which states that we can set the parameters of the algorithm to achieve any desired ARLFA and POR constraints. Also, the algorithm is asymptotically optimal since its delay is within a constant of the delay of the CUSUM algorithm using the highest quality experiment $Y$. The proof is skipped since it is similar to the proof of Theorem \ref{thm:3e}.

\begin{theorem}
\label{thm:de2e}

For the \textup{DE2E-CUSUM} algorithm, let $0 < D(f_0^X \mathrel{\Vert} f_1^X) \leq D(f_0^Y \mathrel{\Vert} f_1^Y) < \infty$.

\begin{enumerate}
\item For any fixed $a_X > 0$, $a_Y > 0$, $N_0 > 0$, $N_X > 0$, and $\mu > 0$, with $A = \log \gamma$, $\textup{ARLFA} (\Psi_{\textup{DE2E}}) \geq \gamma$.

\item For any $A$, we can choose the parameters such that
\begin{align*}
& \textup{POR}_Y (\Psi_{\textup{DE2E}} (A, a_X, a_Y, N_0, N_X, \mu)) \leq \beta_Y \text{ and } \\
& \textup{POR}_X (\Psi_{\textup{DE2E}} (A, a_X, a_Y, N_0, N_X, \mu)) \leq \beta_X.
\end{align*}
If $\beta_X + \beta_Y < 1$, then the data-efficiency constraint is also satisfied.

\item For fixed values of $a_X > 0$, $a_Y > 0$, $N_0 > 0$, $N_X > 0$, and $\mu > 0$; and for each $A$,
$$ \textup{WADD} (\tau_{\textup{\fontsize{6 pt}{10 pt}\selectfont DE2E}}) \leq \mathsf{E}_1 [\tau_{\textup{\fontsize{6 pt}{10 pt}\selectfont DE2E}}] + N_X + N_X N_0. $$
Consequently, 
$$ \textup{WADD} (\Psi_{\textup{DE2E}}) \leq \textup{WADD} (\tau_{\textup{\fontsize{6 pt}{10 pt}\selectfont C}}^Y) + K_2, $$
where $\tau_{\textup{\fontsize{6 pt}{10 pt}\selectfont C}}^Y$ is the \textup{CUSUM} algorithm for the experiment $Y$ and $K_2$ is a constant that is not a function of threshold $A$. Hence, the \textup{DE2E-CUSUM} Algorithm is asymptotically optimal.
\end{enumerate}

\end{theorem}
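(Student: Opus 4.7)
The plan is to mirror the three-part proof of Theorem~\ref{thm:3e}, exploiting the observation that the DE2E-CUSUM algorithm is structurally a degenerate 3E-CUSUM in which the lowest-quality experiment is replaced by deterministic updates with increment $\mu$. Because of this identification, each step reduces to verifying that the substitution ``no observation, deterministic drift'' behaves analytically like an experiment whose statistic increases monotonically.

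For Part~1 (ARLFA), I would first observe that, under the pre-change measure, the statistic $D_n$ restricted to the times when experiment $Y$ is used evolves exactly like the classical CUSUM statistic for $Y$. All other segments (excursions below zero using $X$ or the deterministic drift segments) only add time before the statistic can cross $A$ from above. Thus $\mathsf{E}_\infty[\tau_{\textup{\fontsize{6 pt}{10 pt}\selectfont DE2E}}] \geq \mathsf{E}_\infty[\tau_{\textup{\fontsize{6 pt}{10 pt}\selectfont C}}^Y(A)] \geq e^A$ by Lorden's lower bound, and setting $A = \log \gamma$ yields the claim.

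For Part~2 (POR), I would invoke the Renewal Reward Theorem, noting that whenever $D_n$ returns above zero it is reset to zero, producing i.i.d.\ cycles under $\mathsf{E}_\infty$. In each cycle, the expected number of $Y$-samples is $\mathsf{E}_\infty[\sigma^Y]$, the expected length of the sub-excursion using $X$ equals the expected value of a $\min\{\tau_{\textup{\fontsize{6 pt}{10 pt}\selectfont C}}^X(|a_Y U_Y|), N_X\}$-type stopping time, and the expected length of the deterministic segments is a function of $N_0$, $\mu$, $a_X$, and the conditional undershoot $U_X$. Taking ratios yields $\textup{POR}_Y$, $\textup{POR}_X$, and the data-efficiency fraction. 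The monotone limits needed for designability mirror those used for the 3E-CUSUM: as $a_Y \to \infty$ and $N_X \to \infty$ the denominator grows without bound so $\textup{POR}_Y$ can be driven toward $0$, while sending $N_X \to 0$ forces $\textup{POR}_X \to 0$, and tuning $N_0$, $\mu$ independently adjusts the fraction of ``no experiment'' time. By continuity in the parameters, any $(\beta_Y,\beta_X)$ satisfying $\beta_Y + \beta_X < 1$ and $\beta_Y > 0$ is achievable.

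For Part~3 (WADD), I would consider the worst pre-change path that maximizes the undershoot at the time of change $\nu$. The essential supremum creates an arbitrarily large $|U_Y|$ (and then an arbitrarily large $|U_X|$), so the statistic can only return to zero by exhausting the full budget of samples allotted below zero. That budget is $N_X$ uses of experiment $X$, each of which can trigger up to $N_0$ deterministic drift steps, giving at most $N_X + N_X N_0$ samples before $D_n$ crosses zero. Once $D_n$ is back at zero, the remaining delay is bounded by $\mathsf{E}_1[\tau_{\textup{\fontsize{6 pt}{10 pt}\selectfont DE2E}}]$, which in turn equals $\mathsf{E}_1[\tau_{\textup{\fontsize{6 pt}{10 pt}\selectfont C}}^Y] + K$ because after the change the $Y$-CUSUM has positive drift and only finitely many bounded excursions below zero occur a.s. (as argued in \cite{banerjee2013}). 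Combining these gives the stated bound, and asymptotic optimality follows from $\textup{WADD}(\tau_{\textup{\fontsize{6 pt}{10 pt}\selectfont C}}^Y) \sim \log\gamma / D(f_1^Y \Vert f_0^Y)$.

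The main obstacle I anticipate is Part~3: rigorously justifying the ``finitely many bounded excursions'' claim in the data-efficient setting, because a deterministic drift $\mu > 0$ combined with the scaling by $a_X$ of $U_X$ must be shown to produce almost-sure return of $D_n$ above zero in finite expected time after the change. This needs a careful argument that the post-change positive drift of the $Y$-updates dominates both the magnitude of the scaled undershoots and the total sample budget $N_X(1+N_0)$ allotted below zero, so that only finitely many negative excursions occur before the statistic stays above zero permanently and crosses $A$.
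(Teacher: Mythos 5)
Your proposal is correct and follows essentially the same route as the paper: the paper explicitly omits the proof of this theorem, stating only that it is analogous to the proof of Theorem~\ref{thm:3e}, and your three parts (Lorden's lower bound $\mathsf{E}_\infty[\tau] \geq \mathsf{E}_\infty[\tau_{\textup{\fontsize{6 pt}{10 pt}\selectfont C}}^Y(A)] \geq e^A$, the Renewal Reward Theorem for the POR ratios, and the worst-case-undershoot argument giving the $N_X + N_X N_0$ budget) are exactly the steps of that template with the lowest-quality experiment replaced by the deterministic drift. The obstacle you flag in Part~3 is real but is handled in the paper only by citation to \cite{banerjee2013}, so your treatment matches the paper's level of rigor.
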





\subsection{Truncated \textup{DE2E-CUSUM} Algorithm}
Recall from Section~\ref{sec:3e-cusum} and Section~\ref{sec:multiple-expt} that we defined, in an iterative manner, the 3E-CUSUM algorithm using the truncated 2E-CUSUM algorithm, defined the 4E-CUSUM algorithm using the truncated 3E-CUSUM algorithm, and finally, the $m$E-CUSUM algorithm using the truncated 
$m-1$E-CUSUM algorithm. We use a similar approach to solve the problem of QCD with multiple experiments and data efficiency. The key to this development is the definition of the truncated DE2E-CUSUM algorithm. 

In the truncated DE2E-CUSUM algorithm, we impose a limit on the allowed number of consecutive observations of the highest quality experiment $Y$. We provide the full description of the algorithm below for completeness. 

\clearpage

\begin{algorithm}[Truncated DE2E-CUSUM: $\Psi_{\textup{DE2E}} (A, a_X, a_Y, N_0, N_X, N_Y, \mu)$]
\label{alg:trunc-de2e}

Fix $A > 0$, $a_X > 0$, $a_Y > 0$, $N_0 \geq 0$, $N_X \geq 0$, $N_Y \geq 0$, and $\mu > 0$. The \textup{DE2E-CUSUM} statistic $\{ D_n \}$ is defined as follows:

\begin{enumerate}
\item We start with $D_0 = 0$ and update $\{ D_n \}$ using the observations $\{ Y_n \}$ from the highest quality experiment using
$$ D_{n + 1}  = D_n + \ell_Y (Y_{n + 1}), $$
where $\ell_Y (y) = \log \frac{f_1^Y (y)}{f_0^Y (y)}$, until either time
$$ \sigma := \sigma^Y (A) = \min \{ n \geq 1: D_n \notin (0, A) \} $$
or $N_Y$ number of observations of experiment $Y$ is used. If $D_{\sigma^Y} > A$ or $N_Y$ has been reached, we stop and declare the change.

\item If $D_{\sigma^Y} < 0$, we start using observations $\{ X_n \}$ from the lower-quality experiment and execute a modified truncated \textup{2E-CUSUM} algorithm for experiment $X$ and experiment 0 (where no experiment is performed) as follows:

    \begin{enumerate}
    \item We first scale the undershoot $U_Y:= D_{\sigma^Y}$ by a factor $a_Y$, and use $a_Y U_Y$ as the new zero level for the modified truncated 2E-CUSUM algorithm for experiment $X$ and experiment 0.
    
    \item With $a_Y U_Y$ as the new zero level, the stopping threshold is set at $0$.
    
    \item We update the statistic $\{ D_n \}$ starting at time $\sigma$ using
    $$ D_{n + 1} = D_n + \ell_X (X_{n + 1}) $$
    where $\ell_X (X_{n + 1}) = \log \frac{f_1^X (x)}{f_0^X (x)}$ until either the statistic falls below $a_Y U_Y$ at time
    $$ \sigma^X (a_Y U_Y) = \min \{ n \geq \sigma^Y + 1: D_n \notin (a_Y U_Y, 0) \} $$
    or $N_X$ number of observations of $X$ are used. If $D_{\sigma^X (a_Y U_Y)} > 0$, we reset the statistic $D_n$ to $0$ and go to Step $1)$.
    
    \item If $D_{\sigma^X (a_Y U_Y)} < a_Y U_Y$, we do not perform any experiment, but still scale the undershoot $U_X = D_{\sigma^X (a_Y U_Y)}$ by a factor of $a_X$ and use $a_X U_X$ as the starting level for deterministically updating $D_n$ as
    $$ D_{n + 1} = D_n + \mu. $$
    This continues until either $D_n$ goes above $a_Y U_Y$ or $N_0$ number of deterministic updates are used. At this time, we reset $D_n$ to $a_Y U_Y$ and continue updating $D_n$ as in step $\textup{c)}$.
    \end{enumerate}

\item Stopping Rule: Stop at the stopping time
$$ \tau_{\textup{\fontsize{6 pt}{10 pt}\selectfont DE2E}}(A, a_X, a_Y, N_0, N_X, N_Y, \mu) = \inf \{ n \geq 1: D_n > A \}. $$
\end{enumerate}    
\end{algorithm}
It is clear from the description that 
$$ \tau_{\textup{\fontsize{6 pt}{10 pt}\selectfont DE2E}} = \tau_{\textup{\fontsize{6 pt}{10 pt}\selectfont DE2E}}(A, a_X, a_Y, N_0, N_X, \infty, \mu). $$


\subsection{Data-Efficient Multiple-Experiment Design}

In the data-efficient $m$-experiment QCD problem, the decision-maker has $m$ experiments to perform. It also has the option not to perform any experiment. The problem formulation is as follows:
\begin{align}
\begin{split}
\min_{\Psi} \quad & \textup{WADD} (\Psi) \\
\text{s.t. } \quad & \textup{ARLFA} (\Psi) \geq \gamma \\
& \textup{POR}_i (\Psi) \leq \beta_i \text{ for } i = 1, \dots, m
\end{split}
\end{align}
where $\gamma \geq 0$, $0 < \beta_m < 1$, and $0 \leq \beta_i < 1$ for $i = 2, \dots, m - 1$ are given constants. If $\ds \sum_{i = 1}^m \beta_i = 1$, then this problem reduces to the problem in equation~\eqref{eq:mE-qcd}. Thus, in this problem, we assume that
$$ \sum_{i = 1}^m \beta_i < 1. $$
The gap
$$ \beta_0 = 1 - \sum_{i = 1}^m \beta_i $$
is the constraint on the fraction of time no experiments are performed. We now discuss our algorithm that solves this problem and which is defined using the DE2E-CUSUM algorithm in an iterative manner.



We first propose the Data-Efficient 3E-CUSUM (DE3E-CUSUM) algorithm ($\Psi_{\textup{DE3E}} (A, a_X, a_Y, a_Z, N_0, N_X, N_Y, \mu$). This algorithm starts by using experiment $Z$, the highest-quality experiment, while the statistic is above 0. Once it falls below 0, we scale the undershoot and use this scaled value as the starting point of the truncated DE2E-CUSUM Algorithm for experiments $Y$ and $X$ until the statistic crosses 0 from below. At this point, we start using experiment $Z$ again. A typical evolution of DE3E-CUSUM is illustrated in Figure \ref{fig:de3e-cusum}. A truncated DE3E-CUSUM algorithm is one where the use of the highest quality experiment $Z$ is truncated to a fixed number, say $N_Z$, of samples before a change is declared. 

\begin{figure}[!t]
\centering
\includegraphics[scale=0.6]{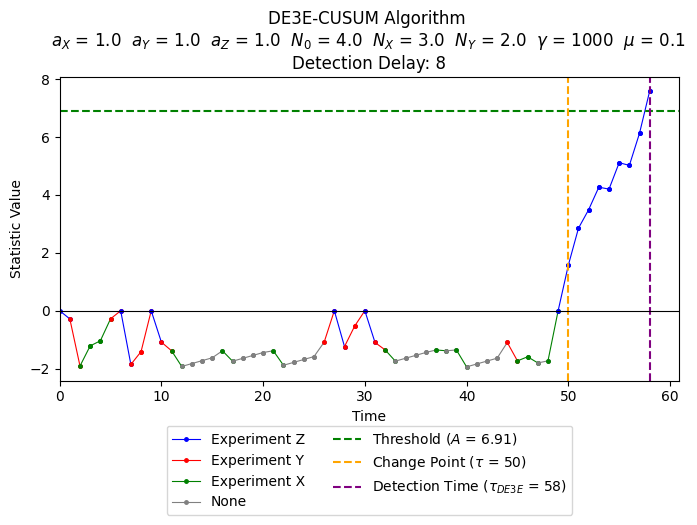}
\caption{An evolution of the statistic $D_n$ of the DE3E-CUSUM algorithm for $f_0^X = f_0^Y = f_0^Z = \mathcal{N}(0, 1)$, $f_1^X = \mathcal{N}(0.5, 1)$, $f_1^Y = \mathcal{N}(0.75, 1)$, $f_1^Z = \mathcal{N}(1, 1)$, $\nu = 50$, and $\gamma = 1000$, with $a_X = a_Y = a_Z = 1.0$, $N_X = 3.0$, $N_Y = 2.0$, $N_0 = 4.0$, and $\mu = 0.1$. The decision-maker performs experiment $Z$ whenever $D_n \geq 0$. When $D_n < 0$, the observer performs experiments $Y$ and $X$ within their thresholds, or does not perform any experiment after experiment $X$.}
\label{fig:de3e-cusum}
\end{figure}

Similar to the DE3E-CUSUM algorithm, the DE4E-CUSUM algorithm is defined as follows. In this algorithm, while the statistic is above 0, the highest-quality experiment is performed. Once the statistic falls below 0, the undershoot is scaled, and the truncated DE3E-CUSUM algorithm is used. 
Once the statistic crosses 0 from below, we perform the highest-quality experiment again.

Finally, the Data-Efficient $m$E-CUSUM (DE$m$E-CUSUM) algorithm ($\Psi_{\textup{DE$m$E}} (A, a_1, a_2, \dots, a_m, N_0, N_1, \dots, N_{m - 1}, \mu)$) is proposed as follows. The algorithm starts with the highest-quality experiment, which is used while the statistic is above 0. When it falls below 0, {\color{black} the undershoot is scaled and this scaled value is used as the starting point of} the truncated DE$(m - 1)$E-CUSUM Algorithm for experiments 1 to $m - 1$. The highest-quality experiment is performed again once the statistic crosses 0 from below.

\begin{algorithm}[DE$m$E-CUSUM: $\Psi_{\textup{DE$m$E}} (A, a_1, a_2, \dots, a_m, N_0, N_1, \dots, N_{m - 1}, \mu)$]
\label{alg:deme-cusum}

Fix $A > 0$, $a_i > 0$ ($i = 1, 2, \dots, m$), $N_j \geq 0$ ($j = 0, 1, \dots, m - 1$), and $\mu > 0$. The \textup{DE$m$E-CUSUM} statistic $\{ D_n \}$ is defined as follows:

\begin{enumerate}
\item We start with $D_0 = 0$ and update $\{ D_n \}$ using the observations $\{ X_n^m \}$ from experiment $m$, the highest quality experiment, using
$$ D_{n + 1}  = D_n + \ell_m (X_{n + 1}^m), $$
where $\ell_m (x^m) = \log \frac{f_1^m (x^m)}{f_0^m (x^m)}$, until 
$$ \sigma := \sigma^m (A) = \min \{ n \geq 1: D_n \notin (0, A) \}. $$
If $D_{\sigma^m} > A$, we stop and declare the change.

\item If $D_{\sigma^m} < 0$, we start using observations $\{ X_n^{m - 1} \}$, $\{ X_n^{m - 2} \}$, \dots, $\{ X_n^1 \}$ from the lower-quality experiments and execute the truncated \textup{DE$(m - 1)$E-CUSUM} algorithm for experiment $m - 1$, experiment $m - 2$, up to experiment 1 as follows:

    \begin{enumerate}
    \item We first scale the undershoot $U_m:= D_{\sigma^m}$ by a factor of $a_m$, and use $a_m U_m$ as the new zero level for the truncated \textup{DE$(m - 1)$E-CUSUM} algorithm for experiment $m - 1$, experiment $m - 2$, up to experiment 1.
    
    \item With $a_m U_m$ as the new zero level, the stopping threshold is set at $0$.
    
    \item We update the statistic $\{ D_n \}$ starting at time $\sigma^m$ using the truncated \textup{DE$(m - 1)$E-CUSUM} policy \\ $\Psi_{\textup{DE$(m - 1)$E}} (a_m U_m, 0, a_1, \dots, a_{m - 1}, N_0, \dots, N_{m - 1}, \mu)$.
    
    \item When the policy $\Psi_{\textup{DE$(m - 1)$E}} (a_m U_m, 0, a_1, \dots, a_{m - 1}, N_0, \dots, N_{m - 1}, \mu)$ terminates, the statistic $D_n$ must cross 0 from below. At this time, we reset the statistic $D_n$ to $0$ and go to Step $1)$.
    \end{enumerate}

\item Stopping Rule: Stop at the stopping time
$$ \tau_{\textup{\fontsize{6 pt}{10 pt}\selectfont DE$m$E}} = \inf \{ n \geq 1: D_n > A \}. $$
\end{enumerate}    
\end{algorithm}

We have the following theorem, which establishes the asymptotic optimality of the DE$m$E-CUSUM Algorithm.

\begin{theorem}
\label{thm:deme}

For the \textup{DE$m$E-CUSUM} algorithm, let
$ 0 < D(f_0^j \mathrel{\Vert} f_1^j) < D(f_0^i \mathrel{\Vert} f_1^i) < \infty $
for $i > j$ where $i, j = 1, 2, \dots, m$.

\begin{enumerate}
\item For any fixed $a_i > 0$ ($i = 1, \dots, m$), $N_j > 0$ ($j = 0, \dots, m - 1$), and $\mu > 0$, with $A = \log \gamma$, $\textup{ARLFA} (\Psi_{\textup{DE$m$E}}) \geq \gamma$.

\item For any $A$, we can choose the parameters such that $\textup{POR}_i (\Psi_{\textup{DE$m$E}} (A, a_1, a_2, \dots, a_m, N_0, N_1, \dots, N_{m - 1}, \mu)) \leq \beta_i$. If $\ds \sum_{i = 1}^m \beta_i < 1$, then the data-efficiency constraint is also satisfied.

\item For fixed values of $a_i > 0$ ($i = 1, 2, \dots, m$) and $N_j > 0$ ($j = 0, 1, \dots, m - 1$), and for each $A$,
$$ \textup{WADD} (\tau_{\textup{\fontsize{6 pt}{10 pt}\selectfont DE$m$E}}) \leq \mathsf{E}_1 [\tau_{\textup{\fontsize{6 pt}{10 pt}\selectfont DE$m$E}}] + \sum_{r = 0}^{m - 1} \prod_{k = (m - 1) - r}^{m - 1} N_k. $$
Consequently,
$$ \textup{WADD} (\Psi_{\textup{DE$m$E}}) \leq \textup{WADD} (\tau_{\textup{\fontsize{6 pt}{10 pt}\selectfont C}}^m) + K_m, $$
where $\tau_{\textup{\fontsize{6 pt}{10 pt}\selectfont C}}^m$ is the \textup{CUSUM} algorithm for the experiment $m$ with highest quality and $K_m$ is a constant that is not a function of threshold $A$. Hence, the \textup{DE$m$E-CUSUM} algorithm is asymptotically optimal.
\end{enumerate}

\end{theorem}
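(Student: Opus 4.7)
The plan is to prove Theorem \ref{thm:deme} by induction on the number of experiments $m$, following the same template used in the proofs of Theorem \ref{thm:3e} and Theorem \ref{thm:me}, with additional bookkeeping to handle the deterministic $+\mu$ updates in the no-experiment region. The base case $m=2$ is handled by Theorem \ref{thm:de2e}, so I would assume the result for $m-1$ experiments and establish it for $m$.

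For Part (1), the key observation is that whenever the statistic $D_n$ is below $0$, the updates (either log-likelihood ratios from the lower-quality experiments executed inside the truncated DE$(m-1)$E-CUSUM sub-routine, or the deterministic $+\mu$ steps in the no-experiment region) are capped at the new zero level $a_m D_{\sigma^m}$ at each nested layer, and in any case are reset to $0$ before experiment $m$ resumes. Consequently, the event $\{D_n > A\}$ can only occur while experiment $m$ is being used, and the subsequence of observations coming from experiment $m$ (between resets to $0$) behaves exactly like a CUSUM for experiment $m$. All other observations only add to the waiting time, yielding $\mathsf{E}_\infty[\tau_{\textup{\fontsize{6 pt}{10 pt}\selectfont DE$m$E}}] \geq \mathsf{E}_\infty[\tau_{\textup{\fontsize{6 pt}{10 pt}\selectfont C}}^m(A)] \geq e^A$. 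Setting $A = \log\gamma$ gives the claim.

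For Part (2), I would apply the Renewal Reward Theorem to the renewal cycles defined by successive resets of $D_n$ to $0$. In each cycle, the expected number of experiment-$m$ observations is $\mathsf{E}_\infty[\sigma^m]$, while the expected cycle time below $0$ is the expected stopping time of the embedded truncated DE$(m-1)$E-CUSUM run starting at $a_m D_{\sigma^m}$. The ratio gives $\textup{POR}_m$, and the induction hypothesis provides independent parametric control over the apportionment of time among experiments $1,\dots,m-1$ and the no-experiment state inside the sub-routine. Scaling $a_m$ and $N_{m-1}$ controls the relative weight of the sub-routine versus the experiment-$m$ phase, and the inner parameters then distribute the sub-routine time as desired. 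This shows any vector $(\beta_1,\dots,\beta_m)$ with $\sum_i \beta_i \leq 1$ is achievable, with the residual $\beta_0$ given automatically as the no-experiment fraction.

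For Part (3), I would use the Lorden essential-supremum reduction: at the worst-case sample path, $D_{\sigma^m}$ can be pushed arbitrarily negative, so the algorithm must exhaust all nested truncation levels before $D_n$ returns above $0$. Counting layers recursively — $N_{m-1}$ calls at the top layer, each of which can spawn $N_{m-2}$ inner calls, down to $N_0$ deterministic steps — gives the bound $\sum_{r=0}^{m-1} \prod_{k=(m-1)-r}^{m-1} N_k$. After the change, a standard drift argument (as in the 2E-CUSUM and 3E-CUSUM proofs) based on the ordering $D(f_0^j \| f_1^j) < D(f_0^i \| f_1^i)$ for $j<i$ ensures that $D_n$ experiences only finitely many bounded excursions below $0$, so $\mathsf{E}_1[\tau_{\textup{\fontsize{6 pt}{10 pt}\selectfont DE$m$E}}] = \mathsf{E}_1[\tau_{\textup{\fontsize{6 pt}{10 pt}\selectfont C}}^m] + K$ for a constant $K$ independent of $A$. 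Combined with the CUSUM asymptotic $\mathsf{E}_1[\tau_{\textup{\fontsize{6 pt}{10 pt}\selectfont C}}^m] \sim \log\gamma / D(f_1^m \| f_0^m)$ and a Lai-style lower bound (any policy with $\textup{POR}_m < 1$ cannot do better than the CUSUM for experiment $m$ up to a constant), this yields asymptotic optimality.

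The main obstacle I anticipate is Part (2): rigorously verifying that the parameter set $(a_1,\dots,a_m,N_0,\dots,N_{m-1},\mu)$ actually spans the full simplex of feasible $(\beta_1,\dots,\beta_m)$ rather than some restricted subset. The monotonicity and continuity of each $\textup{POR}_i$ in the tuning parameters need to be argued layer by layer using the induction hypothesis, and extreme limits (such as $N_j \to 0$ making a given experiment unused, or $a_j, N_j \to \infty$ making it dominate) must be shown to sweep each POR through its full admissible range while holding the others fixed. A secondary delicacy is confirming the post-change "finite bounded excursions" claim in the multiple-experiment setting, which requires a supermartingale argument leveraging the strict ordering of KL divergences.
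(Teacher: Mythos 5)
Your proposal follows essentially the same route the paper takes: the paper omits an explicit proof of this theorem, deferring to the arguments for the 2E- and 3E-CUSUM cases, which are exactly the three ingredients you use (lower-bounding ARLFA by the CUSUM for experiment $m$, the Renewal Reward Theorem for the POR expressions, and the worst-case-undershoot counting of nested truncation budgets for WADD). Your recursive count $N_{m-1} + N_{m-2}N_{m-1} + \dots + N_0 N_1 \cdots N_{m-1}$ matches the stated bound, and the caveats you flag (spanning the full POR simplex and the post-change finite-excursion claim) are precisely the points the paper itself leaves informal.
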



\section{Conclusion}
\label{sec:conclusion}

The quickest change detection (QCD) problem typically involves observing a stochastic process using a single experiment. In this paper, we extended the classical problem by considering the case where the observer chooses between multiple experiments with different information quality to minimize WADD subject to ARLFA and POR constraints.

We proposed the Two-Experiment Cumulative Sum (2E-CUSUM) algorithm to solve the two-experiment QCD problem by starting with the higher-quality experiment in order to detect a change quickly in case it occurs early. In this algorithm, we do not limit the number of observations of the higher-quality experiment, but the lower-quality one is limited. This design allows the decision-maker to attain any POR level desired. We also defined the truncated 2E-CUSUM, where we limit the number of observations of the higher-quality experiment as well.

We then explored the three-experiment case and developed the Three-Experiment CUSUM (3E-CUSUM) algorithm, which is an extension of the 2E-CUSUM algorithm. The 3E-CUSUM algorithm begins by performing the highest-quality experiment. Once the statistic falls below 0, the undershoot is scaled, and this scaled value becomes the starting level for the truncated 2E-CUSUM algorithm for the mid- and lowest-quality experiments. When the statistic crosses 0 from below, the statistic is reset to 0, and the highest-quality experiment is performed again. Similarly, the Four-Experiment CUSUM (4E-CUSUM) algorithm was defined in terms of the truncated 3E-CUSUM algorithm, and this was easily be extended to any number $m$ of experiments.

After understanding the issues and gaining insights from two- and three-experiment designs, we then formulated the problem and proposed the $m$E-CUSUM algorithm, an algorithm which solves the $m$-experiment design case.

Data-efficiency, wherein we incorporated periods when no experiment is performed, was also studied for the two-experiment case. An algorithm for the data-efficient two-experiment scenario was easily developed, motivated by the structure of the three-experiment design: the 3E-CUSUM algorithm was modified to develop the Data-Efficient 2E-CUSUM (DE2E-CUSUM) algorithm, which incorporates a deterministic updating of the statistic when no experiment is performed. We extended this to develop the Data-Efficient 3E-CUSUM (DE3E-CUSUM) algorithm, which starts with the highest-quality experiment. Once the statistic falls below 0, the undershoot is scaled, and this scaled value becomes the starting point for the truncated DE2E-CUSUM algorithm. Once the statistic crosses 0 from below, the statistic is reset to 0, and the highest-quality experiment is performed again. We followed this pattern and generalized the algorithm for data efficiency for the case of $m$ experiments.




%



\section*{Acknowledgment}

This work was supported in part by the U.S. National Science Foundation under grant 2427909.


\ifCLASSOPTIONcaptionsoff
  \newpage
\fi



%

\bibliographystyle{IEEEtran}
\bibliography{QCDPaper}

\begin{thebibliography}{10}
\providecommand{\url}[1]{#1}
\csname url@samestyle\endcsname
\providecommand{\newblock}{\relax}
\providecommand{\bibinfo}[2]{#2}
\providecommand{\BIBentrySTDinterwordspacing}{\spaceskip=0pt\relax}
\providecommand{\BIBentryALTinterwordstretchfactor}{4}
\providecommand{\BIBentryALTinterwordspacing}{\spaceskip=\fontdimen2\font plus
\BIBentryALTinterwordstretchfactor\fontdimen3\font minus \fontdimen4\font\relax}
\providecommand{\BIBforeignlanguage}[2]{{%
\expandafter\ifx\csname l@#1\endcsname\relax
\typeout{** WARNING: IEEEtran.bst: No hyphenation pattern has been}%
\typeout{** loaded for the language `#1'. Using the pattern for}%
\typeout{** the default language instead.}%
\else
\language=\csname l@#1\endcsname
\fi
#2}}
\providecommand{\BIBdecl}{\relax}
\BIBdecl

\bibitem{veeravalli2013}
V.~V. Veeravalli and T.~Banerjee, ``Quickest change detection,'' in \emph{Academic Press Library in Signal Processing: Array and Statistical Signal Processing}.\hskip 1em plus 0.5em minus 0.4em\relax Oxford, U.K.: Acad. Press, 2013, pp. 209--255.

\bibitem{poor2009}
H.~V. Poor and O.~Hadjiliadis, \emph{Quickest detection}.\hskip 1em plus 0.5em minus 0.4em\relax Cambridge, U.K.: Cambridge Univ. Press, 2009.

\bibitem{tartakovsky2014}
I.~V.~N. A.~G.~Tartakovsky and M.~Basseville, \emph{Sequential analysis: Hypothesis testing and change-point detection}.\hskip 1em plus 0.5em minus 0.4em\relax Boca Raton, FL, USA: CRC Press, 2014.

\bibitem{tartakovsky2019}
A.~G. Tartakovsky, \emph{Sequential change detection and hypothesis testing: General non-i.i.d. stochastic models and asymptotically optimal rules}, 1st~ed.\hskip 1em plus 0.5em minus 0.4em\relax Chapman and Hall/CRC, 2019.

\bibitem{shewhart1925}
W.~A. Shewhart, ``The application of statistics as an aid in maintaining quality of a manufactured product,'' \emph{J. Am. Stat. Assoc.}, vol.~20, no. 152, pp. 546--548, Dec. 1925.

\bibitem{page1954}
E.~S. Page, ``Continuous inspection schemes,'' \emph{Biometrika}, vol.~41, no. 1--2, pp. 100--115, Jun. 1954.

\bibitem{shiryaev1963}
A.~N. Shiryaev, ``On optimum methods in quickest detection problems,'' \emph{Theory Probab. Appl.}, vol.~8, no.~1, pp. 22--46, Jan. 1963.

\bibitem{lorden1971}
G.~Lorden, ``Procedures for reacting to a change in distribution,'' \emph{Ann. Math. Statist.}, vol.~42, no.~6, pp. 1897--1908, Dec. 1971.

\bibitem{moustakides1986optimal}
G.~V. Moustakides, ``Optimal stopping times for detecting changes in distributions,'' \emph{the Annals of Statistics}, vol.~14, no.~4, pp. 1379--1387, 1986.

\bibitem{lai1998information}
T.~L. Lai, ``Information bounds and quick detection of parameter changes in stochastic systems,'' \emph{IEEE Transactions on Information Theory}, vol.~44, no.~7, pp. 2917--2929, 1998.

\bibitem{tartakovsky2005general}
A.~G. Tartakovsky and V.~V. Veeravalli, ``General asymptotic bayesian theory of quickest change detection,'' \emph{Theory of Probability \& Its Applications}, vol.~49, no.~3, pp. 458--497, 2005.

\bibitem{krishnamurthy2016}
V.~Krishnamurthy, \emph{Partially observed markov decision processes: From filtering to controlled sensing}.\hskip 1em plus 0.5em minus 0.4em\relax Cambridge, U.K.: Cambridge Univ. Press, 2016.

\bibitem{chernoff1959}
H.~Chernoff, ``Sequential design of experiments,'' \emph{Ann. Math. Statist.}, vol.~30, no.~3, pp. 755--770, Sep. 1959.

\bibitem{bessler1960-2}
S.~A. Bessler, ``Theory and applications of the sequential design of experiments, k-actions and infinitely many experiments, part ii applications,'' Dept. Statist., Standford Univ., Rep.~56, Apr. 1960.

\bibitem{albert1961}
A.~E. Albert, ``The sequential design of experiments for infinitely many states of nature,'' \emph{Ann. Math. Statist.}, vol.~32, no.~3, pp. 774--799, Sep. 1961.

\bibitem{kiefer1963}
J.~Kiefer and J.~Sacks, ``Asymptotically optimum sequential inference and design,'' \emph{Ann. Math. Statist.}, vol.~34, no.~3, pp. 705--750, Sep. 1963.

\bibitem{lalley1986}
S.~P. Lalley and G.~Lorden, ``A control problem arising in the sequential design of experiments,'' \emph{Ann. Probab.}, vol.~14, no.~1, pp. 136--172, Jan. 1986.

\bibitem{keener1984}
R.~Keener, ``Second order efficiency in the sequential design of experiments,'' \emph{Ann. Statist.}, vol.~12, no.~2, pp. 510--532, Jun. 1984.

\bibitem{nitinawarat2013}
G.~K.~A. S.~Nitinawarat and V.~V. Veeravalli, ``Controlled sensing for multihypothesis testing,'' \emph{IEEE Trans. Aut. Control}, vol.~58, no.~10, pp. 2451--2464, Oct. 2013.

\bibitem{naghshvar2013}
M.~Naghshvar and T.~Javidi, ``Active sequential hypothesis testing,'' \emph{Ann. Statist.}, vol.~41, no.~6, pp. 2703--2738, Dec. 2013.

\bibitem{nitinawarat2015}
S.~Nitinawarat and V.~Veeravalli, ``Controlled sensing for sequential multihypothesis testing with controlled markovian observations and non-uniform control cost,'' \emph{Sequential Anal.}, vol.~34, no.~1, pp. 1--24, Feb. 2015.

\bibitem{deshmukh2021}
V.~V.~V. A.~Deshmukh and S.~Bhashyam, ``Sequential controlled sensing for composite multihypothesis testing,'' \emph{Sequential Anal.}, vol.~40, no.~2, pp. 1--38, Mar. 2021.

\bibitem{gurevich2019}
K.~C. A.~Gurevich and Q.~Zhao, ``Sequential anomaly detection under a nonlinear system cost,'' \emph{IEEE Trans. Signal Process.}, vol.~67, no.~14, pp. 3689--3703, Jul. Jul. 2019.

\bibitem{hemo2020}
K.~C. B.~Hemo, T.~Gafni and Q.~Zhao, ``Searching for anomalies over composite hypotheses,'' \emph{IEEE Trans. Signal Process.}, vol.~68, no.~2, pp. 1181--1196, Feb. 2020.

\bibitem{vaidhiyan2018}
N.~K. Vaidhiyan and R.~Sundaresan, ``Learning to detect an oddball target,'' \emph{IEEE Trans. Inf. Theory}, vol.~64, no.~2, pp. 831--852, Feb. 2018.

\bibitem{tsopelakos2019}
G.~F. A.~Tsopelakos and V.~V. Veeravalli, ``Sequential anomaly detection with observation control,'' in \emph{Proc. IEEE Int. Symp. Inf. Theory (ISIT)}, 2019, pp. 2389--2393.

\bibitem{tsopelakos2020}
A.~Tsopelakos and G.~Fellouris, ``Sequential anomaly detection with observation control under a generalized error metric,'' in \emph{Proc. IEEE Int. Symp. Inf. Theory (ISIT)}, 2020, pp. 1165--1170.

\bibitem{tsopelakos2023}
------, ``Sequential anomaly detection with sampling constraints,'' \emph{IEEE Trans. Inf. Theory}, vol.~69, no.~12, pp. 8126--8146, Dec. 2023.

\bibitem{veeravalli2024}
G.~F. V.~V.~Veeravalli and G.~V. Moustakides, ``Quickest change detection with controlled sensing,'' \emph{IEEE J. Sel. Areas Inf. Theory}, vol.~5, pp. 1--11, Feb. 2024.

\bibitem{gopalan2021}
B.~L. A.~Gopalan and V.~Saligrama, ``Bandit quickest changepoint detection,'' \emph{Proc. Adv. Neural Inf. Process. Syst.}, pp. 1--10, Nov. 2021.

\bibitem{dragalin1996}
V.~Dragalin, ``A simple and effective scanning rule for a multi-channel system,'' \emph{Metrika}, vol.~43, no.~1, pp. 165--182, Dec. 1996.

\bibitem{zhang2022}
W.~Zhang and Y.~Mei, ``Bandit change-point detection for real-time monitoring high-dimensional data under sampling control,'' \emph{Technometrics}, vol.~65, no.~1, pp. 33--43, Apr. 2022.

\bibitem{xu2021}
Y.~M. Q.~Xu and G.~V. Moustakides, ``Optimum multistream sequential change-point detection with sampling control,'' \emph{IEEE Trans. Inf. Theory}, vol.~67, no.~11, pp. 7627--7636, Nov. 2021.

\bibitem{chaudhuri2021}
G.~F. A.~Chaudhuri and A.~Tajer, ``Sequential change detection of a correlation structure under a sampling constraint,'' in \emph{Proc. IEEE Int. Symp. Inf. Theory (ISIT)}, 2021, pp. 605--610.

\bibitem{xu2023}
Q.~Xu and Y.~Mei, ``Asymptotic optimality theory for active quickest detection with unknown postchange parameters,'' \emph{Sequential Anal.}, vol.~42, no.~2, pp. 150--181, May 2023.

\bibitem{banerjee2013}
T.~Banerjee and V.~V. Veeravalli, ``Data-efficient quickest change detection in minimax settings,'' \emph{IEEE Trans. Inf. Theory}, vol.~59, no.~10, pp. 6917--6931, Oct. 2013.

\bibitem{banerjee2012-2}
------, ``Energy-efficient quickest change detection in sensor networks,'' in \emph{Proc. IEEE Statist. Signal Process. Workshop}, 2012, pp. 636--639.

\bibitem{banerjee2015}
------, ``Data-efficient minimax quickest change detection with composite post-change distribution,'' \emph{IEEE Trans. Inf. Theory}, vol.~61, no.~9, pp. 5172--5184, Sep. 2015.

\bibitem{banerjee2013-2}
------, ``Data-efficient quickest change detection in distributed and multichannel systems,'' Presented at the IEEE Int. Conf. Acoust., Speech, Signal Process., May 2013.

\bibitem{banerjee2013-3}
V.~V.~V. T.~Banerjee and A.~Tartakovsky, ``Decentralized data-efficient quickest change detection,'' Presented at the IEEE Int. Symp. Inf. Theory, Jul. 2013.

\bibitem{banerjee2015-2}
T.~Banerjee and V.~V. Veeravalli, ``Data-efficient minimax quickest change detection in a decentralized system,'' \emph{Sequential Anal.}, vol.~34, no.~2, pp. 148--170, Feb. 2015.

\bibitem{banerjee2015-3}
------, ``Data-efficient quickest change detection in sensor networks,'' \emph{IEEE Trans. Signal Process.}, vol.~63, no.~14, pp. 3727--3735, Jul. 2015.

\bibitem{ren2017}
K.~H.~J. X.~Ren and L.~Shi, ``Quickest change detection with observation scheduling,'' \emph{IEEE Trans. Autom. Control}, vol.~62, no.~6, pp. 2635--2647, Jun. 2017.

\bibitem{pollak1985}
M.~Pollak, ``Optimal detection of a change in distribution,'' \emph{Ann. Statist.}, vol.~13, no.~1, pp. 206--227, Mar. 1985.

\end{thebibliography}




%








\end{document}